\documentclass{WileyMSP-template}
\usepackage{amsthm}
\usepackage{amsmath,bm}
\usepackage{amssymb}
\usepackage{braket}
\usepackage{bbold}
\usepackage{graphicx}
\usepackage{subfigure}
\usepackage{hyperref}
\usepackage{appendix}
\usepackage{tikz}
\usepackage{color}
\usepackage{ragged2e}
\usepgflibrary{shapes.geometric}
\usetikzlibrary{arrows}

\hypersetup{colorlinks=true, citecolor=blue, urlcolor=blue, linkcolor=blue}

\newcommand{\nc}{\newcommand}
\nc{\tr}{{\operatorname{Tr}}}
\newtheorem{theorem}{Theorem}

\newtheorem{definition}[theorem]{Definition}
\newtheorem{example}[theorem]{Example}
\newtheorem{lemma}[theorem]{Lemma}

\usepackage{natbib}
\begin{document}
%\pagestyle{fancy}
%\rhead{\includegraphics[width=2.5cm]{vch-logo.png}}

\title{Two-parameter bipartite entanglement measure}

\maketitle

\author{Chen-Ming Bai$^{*1}$},
\author{Yu Luo$^{\dag 2}$}

% Dedication

\dedication{}

\begin{affiliations}
$^1$Department of Mathematics and Physics, Shijiazhuang Tiedao University, Shijiazhuang 050043, China\\
$^2$School of Artificial Intelligence and Computer Science, Shaanxi Normal University, Xi'an, 710062, China\\

*Email Address:baichm@stdu.edu.cn;\\
$^\dag$penroseluoyu@gmail.com
\end{affiliations}

\begin{abstract}
\justify
Entanglement concurrence is an important bipartite entanglement measure that has found wide applications in quantum technologies. In this work, inspired by unified entropy, we introduce a two-parameter family of entanglement measures, referred to as the unified $(q,s)$-concurrence. Both the standard entanglement concurrence and the recently proposed $q$-concurrence emerge as special cases within this family. By combining the positive partial transposition and realignment criteria, we derive an analytical lower bound for this measure for arbitrary bipartite mixed states, revealing a connection to strong separability criteria.
Explicit expressions are obtained for the unified $(q,s)$-concurrence in the cases of isotropic and Werner states under the constraint $q>1$ and $qs\geq 1$. Furthermore, we explore the monogamy properties of the unified $(q,s)$-concurrence for $q\geq 2$, $0\leq s\leq 1$ and $1\leq qs\leq 3$, in qubit systems. In addition, we
derive an entanglement polygon inequality for the unified $(q,s)$-concurrence with $q\geq 1$ and $qs\geq 1$, which manifests the relationship among all the marginal entanglements in any multipartite qudit system.
\end{abstract}

\keywords{Entanglement measure, unified entropy, concurrence}
\justify
\section{Introduction}
Quantum entanglement, a fundamental feature of quantum mechanics, underpins quantum information science and serves as a critical resource for enabling quantum technologies. These include quantum communication \cite{PhysRevLett.134.020802}, quantum cryptography \cite{yin2020entanglement,zeng2023controlled},
quantum teleportation \cite{karlsson1998quantum} and quantum secret sharing \cite{hillery1999quantum,singh2024controlled}. Quantifying quantum entanglement is a central task within the resource theory of entanglement \cite{vedral1997quantifying}. There are some interesting entanglement measures  for bipartite entangled systems, such as concurrence \cite{hill1997entanglement,yang2021parametrized,xuan2025new}, negativity \cite{PhysRevA.58.883,vidal2002computable}, entanglement of formation (EoF) \cite{horodecki2001entanglement,PhysRevA.54.3824}, R{\'e}nyi-$\alpha$ entropy entanglement \cite{san2010monogamy}, Tsallis-$q$ entropy entanglement \cite{kim2010tsallis,luo2016general} and unified $(q,s)$-entropy entanglement \cite{san2011unified}. However, the explicit computation of entanglement measures for arbitrary quantum states remains a challenging task, primarily due to the optimization required for mixed states.

So far, analytical results are available only for specific measures--such as in two-qubit systems or certain restricted classes of higher-dimensional states. Several efforts have been made to derive analytical lower bounds for these measures \cite{chen2005entanglement,zhu2012improved,2005Concurrence}. A fundamental question in this context is whether a given bipartite quantum state is entangled or separable. Two widely used operational criteria have been developed for this purpose.
A widely employed separability test is the positive partial transpose (PPT) criterion\cite{1996Separability}. For a bipartite state $\rho_{AB}$, separability implies that its partial transpose with respect to subsystem $A$, denoted $\rho^{T_A}$, must be positive semidefinite. While this condition completely characterizes separability for pure states and for low-dimensional systems (specifically,$2\otimes2$ and $2\otimes3$ composite systems), it provides only a necessary test in Hilbert spaces of higher dimensionality.
Complementing the PPT test, the realignment criterion \cite{1999Reduction,rudolph2005further,2002Chen} offers an alternative operational method for entanglement detection. This approach involves a linear matrix rearrangement operation $\mathcal{R}$, with the property that separable states necessarily obey the inequality
$\|\mathcal{R}(\rho)\|_1 \leq1$ for the trace norm. Both criteria are instrumental in both theoretical and experimental applications of quantum information \cite{horodecki2009quantum}.

Entanglement concurrence is widely used to characterize quantum entanglement in experimental settings. As an entanglement monotone, it is connected to specific forms of quantum Tsallis entropy. Recently,
significant research has been done on parameterized concurrence-based
entanglement measures. In Ref.\cite{yang2021parametrized}, Yang et al. introduced the $q$-concurrence with $q\geq 2$, which is directly related to the Tsallis-$q$ entropy. Furthermore, Yang et al.\cite{yang2022entanglement} derived an entanglement polygon inequality for the $q$-concurrence.
Subsequently, Wei et al.\cite{wei2022estimating} present tight lower bounds of the $q$-concurrence for $q\geq 2$ by exploring the properties of the q-concurrence with respect to the positive partial transposition and realignment of density matrices. Motivated by the parameterized $q$-concurrence, Wei and Fei \cite{wei2022parameterized} proposed a new parameterized entanglement measure, the $\alpha$-concurrence with $0\leq \alpha\leq 1/2$. In addition, Xuan et al.\cite{xuan2025new} introduced a novel
parameterized concurrence, the $C^t_q$-concurrence with $q \geq 2$ to solve the limitation that the above entanglement measurements are defined only by the reduced state of one subsystem. Bao et al.\cite{bao2025parameterized}  proposed two types of novel parameterized entanglement measures, in terms of different ranges of parameter, $q$-concurrence with $q > 1$ and $0 < q < 1$. Li et al. \cite{li2024parametrized} proposed parametrized multipartite entanglement measures from the perspective of $k$ nonseparability.
However, developing
a unified framework for the detection, quantification, and
characterization of entanglement remains a challenging problem.
Unified $(q,s)$-entropy provides many intriguing applications in
the realms of quantum information theory \cite{san2011unified,li2024monogamy}. Hence, a natural problem is how to construct an entanglement measure
from general unified $(q,s)$-entropy. To address this issue, we introduce a novel parameterized concurrence in this paper,
namely the unified $(q,s)$-concurrence.

This paper is organized as follows. In Section \ref{sec:Preliminary}, we propose two types of novel parameterized entanglement measures, in terms of different ranges of parameter, unified $(q,s)$-concurrence with $q\geq1, qs\geq1$ and $0<q<1, 0<qs<1$.
In Section \ref{sec:bound}, we establish rigorous and tight analytical lower bounds for the unified $(q,s)$-concurrence under specific range conditions.  Moreover, we evaluate the unified $(q,s)$-concurrence for isotropic states and Werner states . In Section \ref{sec:Monogamy}, we explore the monogamy properties of unified $(q,s)$-concurrence in multipartite qubit systems for $q\geq1$, $0 \leq s \leq 1$ and $1\leq qs \leq 3$, and we derive an entanglement polygon inequality for the unified $(q,s)$-concurrence with $q\geq1$ and $qs\geq1$. Finally, Section \ref{sec:Conclusion} summarizes the key results and highlights the main findings of our study.
%%%%%%%%%%%%%%%%%%%%%%%%%%%%%%%%%%%%%%%%%%%%%%%%%%%%%%%%%%%%%%
\section{Parametrized entanglement measures}
\label{sec:Preliminary}
Let $\mathcal{H}_A$ and $\mathcal{H}_B$ denote $d$-dimensional Hilbert spaces. For bipartite pure states, a foundational entanglement measure is the concurrence. Given a state $\ket{\psi}_{AB}\in \mathcal{H}_A\otimes \mathcal{H}_B$, its concurrence \cite{hill1997entanglement,rungta2001universal} is defined through the reduced density operator $\rho_A=\tr_B(\ket{\psi}_{AB}\bra{\psi})$ as
\begin{equation}
    \label{eq:concurrence2}
    C(\ket{\psi}_{AB})=\sqrt{2(1-\tr\rho_A^2)}
\end{equation}
quantifying the departure of $\rho_A$ from purity.

Let \(\mathrm{D}(\mathcal{H}_{AB})\) denote the density matrix on \(\mathcal{H}_{AB}\), a well-defined quantum entanglement measure \(\mathcal{E}:\mathrm{D}(\mathcal{H}_{AB})\to\mathbb{R}^+\) should satisfy the following conditions~\cite{horodecki2001entanglement, guhne2009entanglement}:
\begin{enumerate}
    \item[(E1)](\textbf{Non-negativity}) \(\mathcal{E}(\rho) \geq 0\) for any state \(\rho\), where the equality holds only for separable states.
    \item[(E2)](\textbf{LOCC monotonicity}) The measure must not increase, on average, under local operations assisted by classical communication (LOCC). Formally, for any LOCC protocol described by completely positive trace-preserving maps  \(\{M_{i}\}\) with \(\sum_{i} M_{i}^{\dagger} M_{i} = I\), we require
        \begin{equation}
            \mathcal{E}(\rho) \geq \sum_{i} p_{i} \mathcal{E}\big( \rho_{i} \big),
        \end{equation}
        where \(p_{i} = \tr (M_{i} (\rho))\) and \(\rho_{i} = M_{i} (\rho) / p_{i}\).
    \item[(E3)](\textbf{Convexity}) \(\mathcal{E}\) is convex,
        \begin{equation}
            \mathcal{E}\left( \sum_{i} p_{i} \rho_{i} \right) \leq \sum_{i} p_{i} \mathcal{E}\big( \rho_{i} \big)
        \end{equation}
for any probability distribution  \(\{\lambda_{i}\}\) and corresponding states \(\rho_{i}\).
\end{enumerate}

To characterize quantum entanglement, Yang et al.\cite{yang2021parametrized} introduced a parametrized entanglement monotone known as the $q$-concurrence, which is connected to the generalized Tsallis entropy for any $q\geq 2$. For any bipartite pure state $\ket{\psi}_{AB}$ on Hilbert space $\mathcal{H}_A\otimes \mathcal{H}_B$, it is defined as
\begin{equation}
    \label{eq:concurrence2}
    C_q(\ket{\psi}_{AB})=1-\tr\rho^q_A,
\end{equation}
where $\rho_A$ denotes the reduced density matrix of subsystem $A$.

To enrich the characterization of quantum entanglement,
we introduce a family of two-parameter entanglement measures--the unified $(q,s)$-concurrence.
\begin{definition}(\textbf{Unified $(q,s)$-concurrence})
\label{def:qsconcur}
    For any bipartite pure state $\ket{\psi}_{AB}$ on Hilbert space $\mathcal{H}_A\otimes \mathcal{H}_B$, the unified $(q,s)$-concurrence is defined as
    \begin{equation}
        \label{eq:concurrenceqs}
        C_{q,s}(\ket{\psi}_{AB})=\epsilon_{q,s}(1-(\tr\rho^q_A)^s),
    \end{equation}
where $\rho_A$ is the reduced density operator of the subsystem $A$ and the sign factor
$\epsilon_{q,s}$ ensures non-negativity:
\begin{equation}
\label{eq:epsilon_qs}
\epsilon_{q,s} =
\begin{cases}
1, & \text{if } q\geq 1\ \text{and } qs\geq 1 , \\
-1, & \text{if } 0<q<1\ \text{ and } 0<qs<1.
\end{cases}
\end{equation}
\end{definition}

Note that when $s=1$ and $q\geq 2$, i.e., $\epsilon_{q,s}=1$, the unified $(q,1)$-concurrence becomes the $q$-concurrence \cite{yang2021parametrized}.
When $s=1$ and $0<q\leq 1/2$, i.e., $\epsilon_{q,s}=-1$, the unified $(q,1)$-concurrence becomes the $q$-concurrence \cite{wei2022parameterized}. When $s=1$ and $q=2$, we have $C_{2,1}(\ket{\psi}_{AB})=\frac{1}{2}C^2(\ket{\psi}_{AB})$, where the agreement $C(\ket{\psi}_{AB})$ is in \cite{2005Concurrence}.

It is clear that $C_{q,s}(\ket{\psi}_{AB})=0$ if and only if $\ket{\psi}_{AB}$ is a separable state, i.e., $\ket{\psi}_{AB}=\ket{\psi}_A\otimes\ket{\psi}_B$.
The unified $(q,s)$-concurrence is related to the Schatten $q$-norm for positive semidefinite matrices, where the Schatten $q$-norm~\cite{bhatia2013matrix} is defined as
$$\|A\|_q=(\tr A^q)^{1/q}.$$

Any bipartite pure state admits a Schmidt decomposition. Expressing \(|\psi\rangle_{AB}\) in this form yields
\begin{equation}
\label{eq:Schmidt}
|\psi\rangle = \sum_{i=1}^m \sqrt{\lambda_i} |a_i\rangle_A |b_i\rangle_B,
\end{equation}
where $\lambda_i\geq 0$ are the Schmidt coefficients satisfying $\sum_{i=1}^{m}\lambda_i=1$.
Consequently, the reduced density matrices \(\rho_A\) and \(\rho_B\) hare the same spectrum \(\{\lambda_i\}\). Therefore,
\begin{equation}
C_{q,s}(\ket{\psi}_{AB})=\epsilon_{q,s}(1-(\tr\rho^q_A)^s)=\epsilon_{q,s}(1-(\tr\rho^q_B)^s).
\end{equation}
Equivalently,
\begin{equation}
\label{eq:epsilonqs}
C_{q,s}(\ket{\psi}_{AB}) = \epsilon_{q,s}(1 - (\sum_{i=1}^m \lambda_i^q)^s),
\end{equation}
where \(0 \leq C_{q,s}(|\psi\rangle_{AB})\leq  \epsilon_{q,s}(1-m^{s(1-q)})\). The lower bound is attained for product states, and the upper bound is saturated by maximally entangled pure states of the form \(\frac{1}{\sqrt{m}} \sum_{i=1}^m |ii\rangle\).

For a mixed state \(\rho_{AB}\) on Hilbert space \(\mathcal{H}_A \otimes \mathcal{H}_B\), we define its unified $(q,s)$-concurrence via the convex-roof extension as follows:
\begin{equation}
\label{eq:concurrencemix}
C_{q,s}(\rho_{AB}) = \inf_{\{p_i,|\psi_i\rangle\}} \sum_i p_i C_{q,s}(|\psi_i\rangle_{AB}),
\end{equation}
where the infimum is taken over all the pure state decompositions of \(\rho_{AB} = \sum_i p_i |\psi_i\rangle_{AB} \langle \psi_i|\), with  \(p_i \geq 0\) and
\(\sum_i p_i = 1\).
\begin{lemma}
\label{le:concave}
The function $F_{q,s}(\rho)=\epsilon_{q,s}(1-(\tr\rho^q)^s)$ is concave,
where \begin{equation}
\epsilon_{q,s} =
\begin{cases}
1, & \text{if } q\geq 1\ \text{and } qs\geq 1 , \\
-1, & \text{if } 0<q<1\ \text{ and } 0<qs<1.
\end{cases}
\end{equation}

\textbf{(1) Concavity.} For any probability distribution $\{p_i\}$ and corresponding density matrices $\rho_i$, the following inequality holds
\begin{equation}
    \label{eq:concave}
    \sum_{i=1}^np_iF_{q,s}(\rho_i)\leq F_{q,s}(\rho),
\end{equation}
where the equality occurs if and only if all $\rho_i$ are identical for all $p_i > 0$.

\textbf{(2) Subadditivity.} For a general bipartite state \(\rho_{AB}\), $q\geq 1$ and $qs\geq 1$, $F_{q,s}(\rho_{AB})=1-(\tr\rho^q_{AB})^s$ satisfies the inequalities:
\begin{equation}
  |F_{q,s}(\rho_{A})-F_{q,s}(\rho_{B})|\leq F_{q,s}(\rho_{AB})\leq F_{q,s}(\rho_{A})+F_{q,s}(\rho_{B}).
  \label{eq:Subadditivity}
\end{equation}
\end{lemma}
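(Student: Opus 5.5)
The plan is to reduce both parts to statements about Schatten $q$-(quasi)norms. The basic observation is that $(\tr\rho^q)^s=\|\rho\|_q^{qs}$ for every density matrix $\rho$. The exponent $r:=qs$ satisfies $r\ge 1$ in the first regime and $0<r<1$ in the second.

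\textbf{Concavity.}
\begin{itemize}
\item Case $q\ge1$, $qs\ge1$. The map $\rho\mapsto\|\rho\|_q$ is convex because it is a norm, and $t\mapsto t^{r}$ is convex and nondecreasing on $[0,\infty)$. Their composition $\rho\mapsto\|\rho\|_q^{r}$ is therefore convex, so $F_{q,s}=1-\|\rho\|_q^{r}$ is concave. This gives (\ref{eq:concave}).
\item Case $0<q<1$, $0<qs<1$. I would use the reverse Minkowski inequality for positive semidefinite matrices, $\|A+B\|_q\ge\|A\|_q+\|B\|_q$ for $0<q<1$. Together with positive homogeneity it says that $\rho\mapsto\|\rho\|_q$ is concave on the positive cone. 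Since $t\mapsto t^{r}$ with $0<r<1$ is concave and nondecreasing, $\|\rho\|_q^{r}$ is concave, and $F_{q,s}=\|\rho\|_q^{r}-1$ is concave.
\item Equality case. Here I would invoke strict convexity (resp.\ strict concavity) of the outer power function and of the norm on distinct density matrices. This needs $q\neq1$. At $q=1$ we have $\tr\rho=1$, so $F\equiv0$ and equality holds trivially. The ``only if'' clause should therefore be read for $q>1$ (resp.\ $q<1$), and I would state this caveat explicitly.
\end{itemize}

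\textbf{Subadditivity.} The key input is Audenaert's inequality for $q>1$ (the case $q=1$ is trivial):
\[
1+\|\rho_{AB}\|_q\ \ge\ \|\rho_A\|_q+\|\rho_B\|_q .
\]
Set $x=\|\rho_{AB}\|_q$, $y=\|\rho_A\|_q$, $z=\|\rho_B\|_q$, all in $[0,1]$, and $r=qs\ge1$. The right inequality of (\ref{eq:Subadditivity}) is then $y^r+z^r\le 1+x^r$.
\begin{itemize}
\item If $y+z\le1$, then $y^r+z^r\le y+z\le 1\le 1+x^r$.
\item If $y+z>1$, the pair $(y,z)$ is majorized by $(1,\,y+z-1)$. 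Convexity of $t^r$ (Karamata) gives $y^r+z^r\le 1+(y+z-1)^r\le 1+x^r$, where the last step uses $y+z-1\le x$.
\end{itemize}

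\textbf{Triangle inequality.} The left inequality of (\ref{eq:Subadditivity}) follows by the standard purification argument. Purify $\rho_{AB}$ to a pure state $\ket{\Psi}_{ABC}$. By the Schmidt decomposition (\ref{eq:Schmidt}), complementary marginals of a pure state share their nonzero spectrum, so $F_{q,s}(\rho_{AC})=F_{q,s}(\rho_B)$ and $F_{q,s}(\rho_C)=F_{q,s}(\rho_{AB})$. Applying the subadditivity just proved to $\rho_{AC}$ gives $F_{q,s}(\rho_B)\le F_{q,s}(\rho_A)+F_{q,s}(\rho_{AB})$. Exchanging the roles of $A$ and $B$ gives the other inequality, and together they yield $|F_{q,s}(\rho_A)-F_{q,s}(\rho_B)|\le F_{q,s}(\rho_{AB})$.

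\textbf{Main obstacle.} The hard step is Audenaert's norm inequality $1+\|\rho_{AB}\|_q\ge\|\rho_A\|_q+\|\rho_B\|_q$, which is not elementary. I would cite it rather than reprove it. A second point needing care is the reverse Minkowski inequality used in the $q<1$ concavity case; I would likewise cite it (e.g.\ via concavity of $\rho\mapsto(\tr\rho^q)^{1/q}$ from Bhatia).
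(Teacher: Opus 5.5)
Your proof is correct. The concavity part (Minkowski, resp.\ reverse Minkowski, for $\|\cdot\|_q$ composed with the nondecreasing convex, resp.\ concave, map $t\mapsto t^{qs}$) and the purification argument for the lower bound in (\ref{eq:Subadditivity}) are exactly the paper's arguments.

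The difference is in the upper bound. The paper cites Rastegin's subadditivity of the unified entropy $E^{(s)}_q=\frac{1}{(q-1)s}[1-(\tr\rho^q)^s]$ and rescales by $(q-1)s>0$. You start one level lower, from Audenaert's inequality $1+\|\rho_{AB}\|_q\ge\|\rho_A\|_q+\|\rho_B\|_q$, and carry out the reduction yourself via $(y,z)\prec(1,y+z-1)$ and Karamata. That is essentially how the cited result is proved. Your version is therefore more self-contained and shows exactly where $qs\ge1$ and $\|\rho\|_q\le 1$ are used.

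Your caveat on the equality clause is also correct, and it is missing from the paper, which asserts the clause without proof. At $q=1$ one has $F_{1,s}\equiv0$, so the ``only if'' direction fails. One refinement: when $qs=1$ the outer map $t\mapsto t^{qs}$ is linear, so strictness cannot come from it and must come from the norm alone. Equality in Minkowski's inequality for the Schatten $q$-norm with $1<q<\infty$ forces positive proportionality, hence $\rho=\sigma$ for unit-trace states. The analogous argument for $0<q<1$ uses strict concavity of $A\mapsto\tr A^q$.
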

The proof of Lemma \ref{le:concave} is provided in Appendix \ref{sec:Appendix}. Next we prove the unified $(q,s)$-concurrence defined is a well-defined entanglement measure.

%%%%%%%%%%%%%%%%%%%%%%%%%%%%%%%%%%%%%%%%%%%%%%%%%%%%%%%%%%%
\begin{theorem}
The unified $(q,s)$-concurrence defined in Eq.(\ref{eq:concurrencemix}) constitutes a well-defined parameterized measure of bipartite entanglement.
\end{theorem}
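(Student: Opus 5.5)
To show the claim I will verify (E1)--(E3) for $C_{q,s}$ defined by the convex roof in Eq.~(\ref{eq:concurrencemix}), relying throughout on Lemma~\ref{le:concave}.

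\textbf{Step (E1).} For a pure state, Eq.~(\ref{eq:epsilonqs}) together with the choice of $\epsilon_{q,s}$ gives $C_{q,s}(\ket{\psi})\geq 0$. Vanishing occurs iff $\sum_i\lambda_i^q=1$, which forces a single nonzero Schmidt coefficient, i.e.\ a product state.

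For mixed states, nonnegativity is inherited by the infimum. A separable $\rho$ has a decomposition into product pure states, so $C_{q,s}(\rho)=0$. Conversely, suppose $C_{q,s}(\rho)=0$. Here I will use that the infimum is attained. By Carath\'eodory, decompositions with at most $(d^2)^2+1$ elements suffice, these form a compact set, and the pure-state function is continuous. Hence some decomposition has all $C_{q,s}(\ket{\psi_i})=0$, so every $\ket{\psi_i}$ is a product state and $\rho$ is separable.

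\textbf{Step (E3).} Convexity is standard for convex roofs. Given $\rho=\sum_j p_j\rho_j$, take $\varepsilon$-optimal decompositions of each $\rho_j$. Their union, weighted by the $p_j$, is a decomposition of $\rho$. Therefore $C_{q,s}(\rho)\leq\sum_j p_jC_{q,s}(\rho_j)+\varepsilon$, and we let $\varepsilon\to0$.

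\textbf{Step (E2), the main obstacle.} I will follow the Vidal-type argument. The key fact is that $F_{q,s}(\rho_A)$ is a concave function of the reduced state (Lemma~\ref{le:concave}(1)), and it is unitarily invariant since it depends only on the spectrum.

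\emph{Pure states.} Consider a local operation on $A$ with Kraus operators $M_i$ applied to $\ket{\psi}$. The outcomes are $\ket{\psi_i}=(M_i\otimes I)\ket{\psi}/\sqrt{p_i}$. Their $B$-marginals satisfy $\sum_ip_i\rho_B^{(i)}=\rho_B$, because the operation on $A$ does not change $\rho_B$. Concavity then gives
\begin{equation}
\sum_ip_iC_{q,s}(\ket{\psi_i})=\sum_ip_iF_{q,s}(\rho_B^{(i)})\leq F_{q,s}(\rho_B)=C_{q,s}(\ket{\psi}).
\end{equation}
By symmetry the same holds for operations on $B$, using $\rho_A$ and $\rho_B$ isospectrality. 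General LOCC is a sequence of such steps together with classical postprocessing, so the pure-state inequality holds for LOCC.

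\emph{Mixed states.} Take an optimal decomposition $\rho=\sum_jr_j\ket{\psi_j}\bra{\psi_j}$. Each post-measurement state $\rho_i$ decomposes as $\rho_i=\sum_j(r_jp_{ij}/p_i)\ket{\psi_{ij}}\bra{\psi_{ij}}$. Applying the pure-state bound term by term,
\begin{equation}
\sum_ip_iC_{q,s}(\rho_i)\leq\sum_{i,j}r_jp_{ij}C_{q,s}(\ket{\psi_{ij}})\leq\sum_jr_jC_{q,s}(\ket{\psi_j})=C_{q,s}(\rho).
\end{equation}

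The delicate point is the concavity of $F_{q,s}$ across both parameter regimes. For $q\geq1$, $qs\geq1$, the map $\rho\mapsto(\tr\rho^q)^{s}=\|\rho\|_q^{qs}$ is a norm raised to a power $\geq1$, hence convex. In the regime $0<q<1$, $0<qs<1$, one needs the corresponding concavity of $(\tr\rho^q)^s$. I take both facts from Lemma~\ref{le:concave}, so nothing further is required beyond checking that the sign $\epsilon_{q,s}$ aligns them.
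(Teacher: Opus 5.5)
Your proposal is correct and follows essentially the paper's proof: (E1) via the Schmidt coefficients, (E2) via the Vidal-type argument that applies the concavity of $F_{q,s}$ from Lemma~\ref{le:concave} to the marginal left untouched by a local measurement and then lifts to mixed states through a decomposition (you use fine-grained Kraus operators on $A$ and $\rho_B$ where the paper uses CP maps on $B$ and $\rho_A$), and (E3) via the standard convex-roof argument. Your Carath\'eodory/compactness step closes a gap the paper leaves open, because its conclusion that an entangled $\rho_{AB}$ has $C_{q,s}(\rho_{AB})>0$, and its repeated use of ``optimal'' decompositions, tacitly assume the infimum is attained. One caveat you share with the paper concerns the boundary $q=1$: there $\sum_i\lambda_i^q=\tr\rho_A=1$ for every pure state, so $C_{1,s}\equiv 0$ and your step ``$\sum_i\lambda_i^q=1$ forces a single nonzero Schmidt coefficient'' fails. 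Faithfulness, and hence the statement, therefore requires $q>1$ in the $\epsilon_{q,s}=1$ regime.
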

\begin{proof}
\textbf{(E1)} If a pure quantum state \(|\psi\rangle_{AB}\) is separable, then its reduced state \(\rho_A\) is also pure. This implies that \(\rho_A^q=\rho_A \) and \(\tr\rho_A=\tr\rho_A^q = 1\). Thus,
$$C_{q,s}(|\psi\rangle_{AB})=\epsilon_{q,s}(1-(\tr\rho_A^q)^s)=0.$$

Moreover, any bipartite separable mixed state \(\rho_{AB}\) admits a decomposition into bipartite separable pure states, i.e., \(\rho_{AB} = \sum_i p_i |\psi_i\rangle_{AB}\langle\psi_i|\), where each \( |\psi_i\rangle_{AB} \) is separable. It then follows from Eq.~\eqref{eq:concurrencemix} that \( C_{q,s}(\rho_{AB})=0\).

Conversely, for a pure quantum state \( |\phi\rangle_{AB} \), we have
$$ C_{q,s}(\ket{\phi}_{AB})=\epsilon_{q,s}(1-(\tr\rho^q_A)^s). $$
Let $\{\lambda_i\}_{i=1}^m$ denote the eigenvalues of
\(\rho_A\), satisfying $0\leq \lambda_i \leq 1$ and $\sum_{i=1}^m\lambda_i=1$.
For $q\geq 1$ and $qs\geq 1$, it holds that $(\sum_{i=1}^m\lambda_i^q)^s\leq 1$, whereas for $0<q<1$ and $ 0<qs<1$, we have $(\sum_{i=1}^m\lambda_i^q)^s\geq 1$. Hence, we can obtain that
$$C_{q,s}(\ket{\phi}_{AB})=\epsilon_{q,s}(1-(\tr\rho^q_A)^s)=\epsilon_{q,s}(1-(\sum_{i=1}^m \lambda_i^q)^s)\geq 0.$$
If $C_{q,s}(\ket{\phi}_{AB})=0$, then each \( \lambda_i \) must be either 0 or 1.
Thus, \(\rho_A = |\phi\rangle_A \langle \phi| \) is a pure state, implying that \( |\phi\rangle_{AB} = |\phi\rangle_A \otimes |\phi\rangle_B \) is  separable. Conversely, if \( \rho_{AB} \) is entangled,  then any pure-state decomposition must contain at least one entangled state satisfying \( |\Phi_i\rangle_{AB} \) such that \( C_{q,s}(|\Phi_i\rangle_{AB}) > 0 \), and therefore, \( C_{q,s}(\rho_{AB}) > 0 \).
%%%%%%%%%%%%%%%%%%%%%%%%%%%%%%%%%%%%%%%%%%%%%%%%%%%%%%%%%%%%%%%%%%%%%%%%%%%%

\textbf{(E2)} We first demonstrate that unified $(q,s)$-concurrence $C_{q,s}(\ket{\phi}_{AB})$ is non-increasing on average under LOCC for any pure state $\ket{\phi}_{AB}$. Let $\{M_i\}$ be a completely positive
 trace-preserving map (CPTP) acting on the subsystem $B$. The post-measurement  states are given by
\begin{equation}
    \rho_i^{AB}=\frac{1}{p_i}M_i(\ket{\phi}_{AB}\bra{\phi}),
\end{equation}
with probability $p_i=\tr[M_i(\ket{\phi}_{AB}\bra{\phi})]$. This implies the relation $\rho_A=\sum_ip_i\rho_i^A$, where $\rho_A=\tr_B(\ket{\phi}_{AB}\bra{\phi})$ and $\rho_i^A=\tr_B\rho_i^{AB}$.

Now, let $\{q_{il},\ket{\phi_{il}}_{AB}\}$ denote the optimal pure-state decomposition of $\rho_i^{AB}$, namely, $\rho_i^{AB}=\sum_lq_{il}\ket{\phi_{il}}_{AB}\bra{\phi_{il}}$, satisfying the condition that
\begin{equation}
\label{eq:optimal1}
C_{q,s}(\rho_i^{AB}) = \sum_l q_{il} C_{q,s}(\ket{\phi_{il}}_{AB}),
\end{equation}
where $q_{il}\geq 0$ and $\sum_lq_{il}=1$.

We then derive the following chain of inequalities:
\begin{eqnarray}
\label{eq:purestate}
C_{q,s}(\ket{\phi}_{AB})&=&C_{q,s}(\rho_{A})\nonumber\\
&=&C_{q,s}\Big(\sum_ip_i\rho_i^A\Big)\nonumber \\
&\overset{(a)}{=}&C_{q,s}\Big(\sum_{il}p_iq_{il}\rho_{il}^A\Big)\nonumber\\
&\overset{(b)}{\geq}&\sum_{il}p_iq_{il}C_{q,s}(\rho_{il}^A)\nonumber\\
&\overset{(c)}{=}&\sum_{il}p_iq_{il}C_{q,s}(\ket{\phi_{il}}_{AB})\nonumber\\
&\overset{(d)}{=}&\sum_{i}p_iC_{q,s}(\rho_{i}^{AB}),
\end{eqnarray}
where the validity of (a) follows from $\rho_{il}^A=\tr_B(\ket{\phi_{il}}_{AB}\bra{\phi_{il}})$; (b) relies on the concavity property established in Lemma~\ref{le:concave}; (c) holds follows from Definition \ref{def:qsconcur}; (d) is justified by Eq.(\ref{eq:optimal1}).

For the mixed state $\rho_{AB}$, we consider the post-measurement states $\rho_{i}^{AB}=\frac{1}{p_i}M_i(\rho_{AB})$ with $p_i=\tr[M_i(\rho_{AB})]$. Suppose that $\{t_{j},\ket{v_{j}}_{AB}\}$ represents the optimal pure state decomposition of $\rho_{AB}$, i.e., $\rho_{AB}=\sum_jt_{j}\ket{v_{j}}_{AB}\bra{v_{j}}$. For each $j$, we define $\rho_{ji}^{AB}=\frac{1}{t_{ji}}M_i(\ket{v_j}_{AB})$, which is the final state obtain after the action $M_i$ on $\ket{v_j}_{AB}$ and $t_{ji}=\tr\Big[M_i(\ket{v_j}_{AB})\Big]$.

By Eq.(\ref{eq:purestate}), it follows that
\begin{equation}
\label{eq:purestate1}
    C_{q,s}(\ket{v_{j}}_{AB})\geq \sum_i t_{ji}C_{q,s}(\rho_{ji}^{AB}).
\end{equation}

Due to the linearity of $M_i$, we obtain that
\begin{eqnarray}
  \rho_{i}^{AB}&=&\frac{1}{p_i}M_i(\rho_{AB}) \nonumber\\
  &=&\frac{1}{p_i}M_i\Big(\sum_jt_{j}\ket{v_{j}}_{AB}\Big)\nonumber\\
  &=&\frac{1}{p_i}\sum_jt_{j}M_i(\ket{v_{j}}_{AB})\nonumber\\
  &=&\frac{1}{p_i}\sum_jt_{j}t_{ji}\rho_{ji}^{AB}.
\end{eqnarray}

Now, for each index pair $(j,i)$, let $\rho_{ji}^{AB}=\sum_lt_{jil}\ket{\Phi_{jil}}_{AB}\bra{\Phi_{jil}}$ be the optimal decomposition of $C_{q,s}(\rho_{ji}^{AB})$, i.e.,
\begin{equation}
\label{eq:rhojiAB}
  C_{q,s}(\rho_{ji}^{AB})=\sum_lt_{jil}C_{q,s}(\ket{\Phi_{jil}}_{AB}).
\end{equation}
Hence, we have that
\begin{eqnarray}
    C_{q,s}(\rho_{AB})&=&\sum_jt_jC_{q,s}(\ket{v_j}_{AB})\nonumber\\
    &\overset{(a)}{\geq}& \sum_{ji}t_jt_{ji}C_{q,s}(\rho_{ji}^{AB})\nonumber\\
    &\overset{(b)}{=}& \sum_{jil}t_jt_{ji}t_{jil}C_{q,s}(\ket{\Phi_{jil}}_{AB})\nonumber\\
    &\overset{(c)}{=}& \sum_{i}p_iC_{q,s}(\rho_{i}^{AB}),
\end{eqnarray}
where the validity of (a) follows from Eq.(\ref{eq:purestate1}); (b) holds follows from Eq.(\ref{eq:rhojiAB}); (c) holds follows from $C_{q,s}(\rho_i^{AB})=\sum_{jl}\frac{t_jt_{ji}t_{jil}}{p_i}t_jC_{q,s}(\ket{\Phi_{jil}}_{AB})$ with $\rho_i^{AB}=\frac{1}{p_i}\sum_jt_jt_{ji}t_{jil}\ket{\Phi_{jil}}_{AB}\bra{\Phi_{jil}}$.
%%%%%%%%%%%%%%%%%%%%%%%%%%%%%%%%%%%%%%%%%%%%%%%%%%%%%%%%%%%%%%%%%%%

\textbf{(E3)} Consider a state $\rho_{AB}=\sum_ik_{i}\rho^{AB}_{i}$, where $k_i \geq 0$ are probabilities satisfying $\sum_i k_i = 1$. Let $\rho_i^{AB}=\sum_jk_{ij}\ket{\psi_{ij}}_{AB}\bra{\psi_{ij}}$ be the optimal pure state decomposition for evaluating the concurrence $C_{q,s}(\rho_i^{AB})$. Therefore, we have that
\begin{equation}
\rho_{AB}=\sum_ik_{i}\rho^{AB}_{i}=\sum_{ij}k_{i}k_{ij}\ket{\psi_{ij}}_{AB}\bra{\psi_{ij}}.
\end{equation}

Hence,
\begin{eqnarray}
C_{q,s}\Big(\sum_ik_{i}\rho^{AB}_{i}\Big)&=&C_{q,s}(\rho_{AB})\nonumber\\
&=& \inf_{\{\rho_i,|\psi_i\rangle\}} \sum_i p_i C_{q,s}(|\psi_i\rangle_{AB})\nonumber\\
    &\overset{(a)}{\leq}& \sum_{ij}k_ik_{ij}C_{q,s}(\ket{\psi_{ij}}_{AB})\nonumber\\
    &\overset{}{=}& \sum_{i}k_i\Big(\sum_jk_{ij}C_{q,s}(\ket{\psi_{ij}}_{AB})\Big)\nonumber\\
    &\overset{(b)}{=}& \sum_{i}k_iC_{q,s}(\rho_{i}^{AB}),
\end{eqnarray}
where (a) holds follows from Eq.(\ref{eq:concurrencemix}); (b) holds because $\rho_i^{AB}=\sum_jk_{ij}\ket{\psi_{ij}}_{AB}\bra{\psi_{ij}}$ is the optimal pure state decomposition of $C_{q,s}(\rho_i^{AB})$.
\end{proof}
%%%%%%%%%%%%%%%%%%%%%%%%%%%%%%%%%%%%%%%%%%%%%%%%%%%%%%%%
\section{Establishing lower bounds for the unified $(q,s)$-concurrence}
\label{sec:bound}
Unlike the case of pure entangled states in Eq.(\ref{eq:concurrenceqs}), quantifying entanglement for mixed states remains challenging due to the optimization procedures required \cite{horodecki2009quantum}. Nevertheless, an operationally effective method is presented here to detect the unified $(q,s)$-concurrence for arbitrary bipartite quantum states. In this section, we derive analytical lower bounds for the unified $(q,s)$-concurrence using the positive partial transpose (PPT) criterion and the realignment criterion. Before presenting the lower bound, we briefly review these two separability criteria.

The PPT criterion \cite{1996Separability,Horodecki1996Separability} provides a necessary condition for separability. For a bipartite state $\rho_{AB} = \sum_{ijkl} p_{ij,kl}|ij\rangle \langle kl|$, the partial transposition with respect to subsystem $A$ is defined as
\begin{equation}
\rho^{T_A} = \left( \sum_{ijkl} p_{ij,kl}|ij\rangle \langle kl| \right)^{T_A} = \sum_{ijkl} p_{ij,kl}|kj\rangle \langle il|,
\end{equation}
where $i$ and $k$ denote row and column indices for subsystem $A$, while $j$ and $l$ correspond to subsystem $B$. If $\rho_{AB}$ is separable, then $\rho^{T_A}$ must be positive semidefinite, i.e., $\rho^{T_A} \geq 0$.

The realignment criterion \cite{2002Chen,1999Reduction} offers another separability test. The realignment operation $\mathcal{R}$ acts on the density operator as
\begin{equation}
\mathcal{R}(\rho) = \sum_{ijkl} p_{ij,kl}|ik\rangle \langle jl|.
\end{equation}
For any separable state, the trace norm satisfies $\|\mathcal{R}(\rho)\|_1 \leq 1$, where $\|X\|_1= \tr\sqrt{X^{\dagger}X}$.

These criteria imply entanglement detection when either
\begin{equation}
\label{eq:trace1}
\|\rho^{T_A}\|_1 > 1 \quad \text{or} \quad \|\mathcal{R}(\rho)\|_1 > 1.
\end{equation}

For the pure state $\ket{\psi}$ in Eq.(\ref{eq:Schmidt}), it can be shown \cite{2005Concurrence} that both norms reduce to
\begin{equation}
\|\rho^{T_A}\|_1 = \|\mathcal{R}(\rho)\|_1 = \left( \sum_{i=1}^m \sqrt{\lambda_i} \right)^2\leq m,
\label{eq:norms1}
\end{equation}
where $\{\lambda_i\}$ forms a probability distribution.
%%%%%%%%%%%%%%%%%%%%%%%%%%%%%%%%%%%%%%%%%%%%%%%%%%%%%%%%%%%%%
\begin{theorem}
\label{th:bound1}
For any mixed entanglement state $\rho$ on Hilbert
space $H_A\otimes H_B$, the unified $(q,s)$-concurrence defined in Eq.(\ref{eq:concurrencemix}) satisfies the following inequality:
\begin{equation}
\label{eq:the41}
C_{q,s}(\rho) \geq \frac{1-m^{s(1-q)}}{1-m^{-s}}\left[1-\left(1-\frac{\left( \max\left\{ \|\rho^{T_A}\|_1, \|\mathcal{R}(\rho)\|_1 \right\} - 1 \right)^2}{m(m-1)}\right)^s\right]
\end{equation}
for either $m\geq 2$, $q\geq 2$, $s\geq1.1391$ or $m\geq 2$, $s\geq 1$, $q\geq 2.4721$.
\end{theorem}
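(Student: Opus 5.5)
This follows the route of Chen--Albeverio--Fei and Wei et al.: first a pure-state bound, then an extension to mixed states by convexity. For a pure state with Schmidt coefficients $\{\lambda_i\}_{i=1}^m$, Eq.~(\ref{eq:norms1}) gives
\[
\|\rho^{T_A}\|_1=\|\mathcal{R}(\rho)\|_1=\Big(\sum_i\sqrt{\lambda_i}\Big)^2=1+2\sum_{i<j}\sqrt{\lambda_i\lambda_j}.
\]
Set $x=\sum_i\lambda_i^2=\tr\rho_A^2\in[1/m,1]$. Cauchy--Schwarz gives $\big(2\sum_{i<j}\sqrt{\lambda_i\lambda_j}\big)^2\le m(m-1)\sum_{i\ne j}\lambda_i\lambda_j=m(m-1)(1-x)$. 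Hence
\[
1-\frac{(\|\rho^{T_A}\|_1-1)^2}{m(m-1)}\ \ge\ x .
\]
So the right-hand side of (\ref{eq:the41}) evaluated at $|\psi\rangle$ is at most $c_m\,(1-x^s)$, where $c_m=\frac{1-m^{s(1-q)}}{1-m^{-s}}$. It therefore suffices to establish the pure-state inequality
\[
1-\Big(\sum_i\lambda_i^q\Big)^s\ \ge\ c_m\,(1-x^s).
\]

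I would obtain this inequality through a one-variable function. Define $g(x)=1-x^s$ and $h(x)=\max\{1-(\tr\rho^q)^s : \tr\rho^2=x\}$; the quantity we actually need is the minimum of $1-(\sum\lambda_i^q)^s$ at fixed $x$. Both endpoints match the claimed constant. At $x=1$ both sides vanish. At $x=1/m$ (maximally entangled) we get $1-m^{s(1-q)}=c_m(1-m^{-s})$, which is exactly why $c_m$ has this form. The key step is to show that the function $\Phi(\lambda)=1-(\sum\lambda_i^q)^s-c_m\big(1-(\sum\lambda_i^2)^s\big)$ is nonnegative on the simplex. I would reduce it to a two-level problem, as in the $q$-concurrence literature: at fixed $\sum\lambda_i^2$, the power sum $\sum\lambda_i^q$ for $q\ge2$ is extremized on vectors with at most two distinct nonzero values. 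Equivalently, I would parametrize the worst case as $\lambda=(t,\frac{1-t}{m-1},\dots)$ and verify a single-variable inequality in $t\in[1/m,1]$.

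For that single-variable inequality, I would write $u=x^s$ and aim to show that $y=(\sum\lambda_i^q)^s$, viewed as a function of $u$ along the extremal curve, lies below the chord joining $(m^{-s},m^{s(1-q)})$ and $(1,1)$. This is the same as showing that $y$ is convex in $u$ and monotone along the curve. Convexity in $u$ requires the parameter constraints. Differentiating twice produces an expression whose sign depends on $q$ and $s$. The thresholds $s\ge1.1391$ (when $q\ge2$) and $q\ge2.4721\approx1+\sqrt{5}-\ldots$ (when $s\ge1$) should arise as the points where a worst-case derivative, attained at $m=2$ near the endpoint, changes sign. I would check the $m=2$ case explicitly, with $\lambda=(t,1-t)$, to locate these constants. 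I expect this calculus step — proving convexity, or the chord inequality, uniformly in $m$ under exactly these thresholds — to be the main obstacle. I would first try monotonicity in $m$ so that everything reduces to $m=2$.

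For mixed states, let $\{p_i,|\psi_i\rangle\}$ be an optimal decomposition and set $N_i=\max\{\|\psi_i^{T_A}\|_1,\|\mathcal{R}(\psi_i)\|_1\}$; both norms coincide for pure states. The function $G(N)=c_m\big[1-\big(1-\frac{(N-1)^2}{m(m-1)}\big)^s\big]$ is monotone increasing and convex in $N\in[1,m]$; this needs $s\ge1$ and is a routine check. Therefore
\[
C_{q,s}(\rho)=\sum_ip_iC_{q,s}(\psi_i)\ge\sum_ip_iG(N_i)\ge G\Big(\sum_ip_iN_i\Big)\ge G\big(\|\rho^{T_A}\|_1\big),
\]
where the last step uses the triangle inequality $\|\rho^{T_A}\|_1\le\sum_ip_i\|\psi_i^{T_A}\|_1$ together with monotonicity of $G$. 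The same chain works for $\mathcal{R}$, so the bound holds with the maximum of the two norms, as stated in (\ref{eq:the41}).
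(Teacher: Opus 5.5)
You follow the paper's route. Cauchy--Schwarz (the Chen--Albeverio--Fei estimate) reduces the pure-state case to $1-(\sum_i\lambda_i^q)^s\ge c_m\bigl(1-(\sum_i\lambda_i^2)^s\bigr)$. The paper phrases this as monotonicity in $q$ of $f(q,s)=C_{q,s}(\psi)/(1-m^{s(1-q)})$, and then passes to mixed states by a Jensen step.

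The step you defer as ``the main obstacle'' cannot be carried out. It is false on part of the stated range, and so is the theorem. Take $m=2$ and $\lambda=(1-\epsilon,\epsilon)$, with $N=\|\rho^{T_A}\|_1$. Cauchy--Schwarz is then an equality, $(N-1)^2/2=2\lambda_1\lambda_2=1-x$. So the right-hand side of (\ref{eq:the41}) is exactly $c_2(1-x^s)$, and your key inequality is the theorem itself for pure two-qubit states. Expanding gives $C_{q,s}(\psi)=qs\,\epsilon+O(\epsilon^2)$ and $c_2(1-x^s)=\frac{2s(1-2^{s(1-q)})}{1-2^{-s}}\,\epsilon+O(\epsilon^2)$. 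The inequality therefore requires $q(1-2^{-s})\ge 2(1-2^{s(1-q)})$.
\begin{itemize}
\item For $s=1$ this fails for every $2<q<3$. At $q=5/2$ the slopes are $5/2$ versus $4-\sqrt{2}\approx 2.586$. Concretely, $\lambda=(0.99,0.01)$ gives $C_{5/2,1}\approx 0.02480$, against a claimed bound of about $0.02560$.
\item For $q$ slightly above $2$ it needs $2^s\ge 1+2s\ln 2$, i.e.\ $s$ above roughly $1.81$. For example, $q=2.1$, $s=1.5$ violates it.
\end{itemize}
So the $m=2$ computation you propose would not locate the constants $1.1391$ and $2.4721$; it refutes them. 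The paper does not close this gap either. Its Lagrange-multiplier argument only inspects the interior point $\lambda_i=1/m$. Near product states, $\partial f/\partial q$ has the sign of $1-m^{s(1-q)}(1+qs\ln m)$, which is negative at, e.g., $m=2$, $q=2$, $s=1$.

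Your mixed-state step also rests on a false ``routine check''. Write $a=(N-1)^2/(m(m-1))$. Then $G''(N)\propto(1-a)^{s-2}\bigl[1-(2s-1)a\bigr]$, so $G$ is convex on $[1,m]$ only when $(2s-1)(m-1)\le m$. For $s>1$ this fails once $m$ is large. It already fails for $s=2$, $m=2$, where $G\propto (N-1)^2-(N-1)^4/4$ is concave for $N-1>\sqrt{2/3}$. Without convexity, Jensen only yields the bound with the convex envelope of $G$, which is strictly weaker than (\ref{eq:the41}) near $N=m$. The paper's version of this step is invalid in a different way. Convexity of $x^s$ gives $\sum_i p_i(1-a_i)^s\ge(1-\sum_i p_i a_i)^s$, which is the opposite of the direction it uses. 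A correct result would need both a different parameter range (or a different pure-state comparison function) and a convexified right-hand side.
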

\begin{proof}
    For a pure state $\ket{\psi}$ given in Eq.(\ref{eq:Schmidt}), let us analyze the monotonicity of the function with respect to $q$ under the fixed $s$,
    \begin{equation}
        f(q,s)=\frac{1-(\sum_{i=1}^m\lambda_i^q)^s}{1-m^{s(1-q)}}
    \end{equation}
for any $q\geq 2$ and $qs\geq 1$. The first derivative of
$f(q,s)$ with respect to $q$ is given by
\begin{equation}
    \frac{\partial f}{\partial q}=\frac{G(m,q,s)}{(1-m^{s(1-q)})^2},
\end{equation}
where
\begin{eqnarray*}
   G(m,q,s)&=&s(m^{s(1-q)}-1)\Big(\sum_{i=1}^m\lambda_i^q\Big)^{s-1}\Big(\sum_{i=1}^m\lambda_i^q\ln\lambda_i\Big)\\
   &&-sm^{s(1-q)}\ln m\Big(1-\Big(\sum_{i=1}^m\lambda_i^q\Big)^s\Big).
\end{eqnarray*}
Employing the Lagrange multiplies \cite{2003Concurrence} under constraints $\sum_{i=1}^m\lambda_i=1$ and $\lambda_i>0$, one has that there is only one stable point $\lambda_i=\frac{1}{m}$ for every $i = 1,\cdots, m,$
for which $G(m,q,s)=0$ for any $q\geq 2$ and $qs\geq 1$. Since the second
derivative at this point is given by
\begin{eqnarray*}
    \frac{\partial^2G(m,q,s)}{\partial\lambda_i^2}\Big|_{\lambda_i=\frac{1}{m}}
    &=&sm^{2+s(1-q)}\Big((2qs-1)(m^{s(1-q)}-1)\\
    &&+sq(qs-1)\ln m\Big).
\end{eqnarray*}
Therefore, we get that
\begin{equation}
    \frac{\partial^2G(m,q,s)}{\partial\lambda_i^2}\Big|_{\lambda_i=\frac{1}{m}}\geq0
\end{equation}
for either $m\geq 2$, $q\geq 2$, $s\geq1.1391$ or $m\geq 2$, $s\geq 1$, $q\geq 2.4721$. In these cases, the minimum extreme point is the minimum point and $\frac{\partial f}{\partial q} > 0$. Therefore, $f(q,s)$ is  an increasing function with respect to $q$ for the fixed $s$.
Hence, we can obtain that
\begin{eqnarray}
\label{eq:purebound}
    C_{q,s}(\ket{\psi})&\geq&\frac{1-m^{s(1-q)}}{1-m^{-s}}C_{2,s}(\ket{\psi})\nonumber\\
    &=&\frac{1-m^{s(1-q)}}{1-m^{-s}}\left(1-\Big(\sum_{i=1}^m\lambda_i^2\Big)^s\right)\nonumber\\
    &=&\frac{1-m^{s(1-q)}}{1-m^{-s}}\left(1-\Big(1-2\sum_{i<j}\lambda_i\lambda_j\Big)^s\right)\\
    &\geq&\frac{1-m^{s(1-q)}}{1-m^{-s}}\left(1-\Big(1-\frac{(\|\rho^{T_A}\|_1-1)^2}{m(m-1)}\Big)^s\right)\nonumber
\end{eqnarray}
where $\rho=\ket{\psi}\bra{\psi}$, and the last inequality is due to that $\sum_{i<j}\lambda_i\lambda_j\geq \frac{(\|\rho^{T_A}\|_1-1)^2}{2m(m-1)}$ \cite{2005Concurrence}.

Assume $\rho=\sum_ip_i\ket{\varphi_i}\bra{\varphi_i} $ is
the optimal pure state decomposition for $C_{q,s}(\rho)$. For either $m\geq 2$, $q\geq 2$, $s\geq1.1391$ or $m\geq 2$, $s\geq 1$, $q\geq 2.4721$, we have that
  \begin{eqnarray}
  \label{eq:oppbound1}
    C_{q,s}(\rho)&=&\sum_ip_iC_{q,s}(\ket{\varphi_i})\nonumber\\
    &\geq&\frac{1-m^{s(1-q)}}{1-m^{-s}}\sum_ip_i\left(1-\Big(1-\frac{(\|\rho_i^{T_A}\|_1-1)^2}{m(m-1)}\Big)^s\right)\nonumber\\
    &=&\frac{1-m^{s(1-q)}}{1-m^{-s}}\left(1-\sum_ip_i\Big(1-\frac{(\|\rho_i^{T_A}\|_1-1)^2}{m(m-1)}\Big)^s\right)\nonumber\\
    &\geq&\frac{1-m^{s(1-q)}}{1-m^{-s}}\left(1-\Big(1-\frac{\sum_ip_i(\|\rho_i^{T_A}\|_1-1)^2}{m(m-1)}\Big)^s\right)\nonumber\\
    &\geq&\frac{1-m^{s(1-q)}}{1-m^{-s}}\left(1-\Big(1-\frac{(\sum_ip_i\|\rho_i^{T_A}\|_1-1)^2}{m(m-1)}\Big)^s\right)\nonumber\\
    &\geq&\frac{1-m^{s(1-q)}}{1-m^{-s}}\left(1-\Big(1-\frac{(\|\rho^{T_A}\|_1-1)^2}{m(m-1)}\Big)^s\right),
\end{eqnarray}
where $\rho_i=\ket{\varphi_i}\bra{\varphi_i}$. The first inequality is from Eq.(\ref{eq:purebound}); the second and third inequalities are obtained from the convexity of the function $f(x)=x^s$ with $s\geq1$; the last inequality is due to the
convex property of the trace norm and $\|\rho^{T_A}\|_1\geq 1$ in Eq.(\ref{eq:trace1}).

Similar to Eq.(\ref{eq:purebound}) and Eq.(\ref{eq:oppbound1}), we obtain from Eq.(\ref{eq:norms1}) that
\begin{equation}
    \label{eq:oppbound2}
     C_{q,s}(\rho)\geq\frac{1-m^{s(1-q)}}{1-m^{-s}}\left(1-\Big(1-\frac{(\|\mathcal{R}(\rho)\|_1-1)^2}{m(m-1)}\Big)^s\right)
\end{equation}
for either $m\geq 2$, $q\geq 2$, $s\geq1.1391$ or $m\geq 2$, $s\geq 1$, $q\geq 2.4721$.

Combining Eq.(\ref{eq:oppbound1}) and Eq.(\ref{eq:oppbound2}), we complete the proof.
\end{proof}
%%%%%%%%%%%%%%%%%%%%%%%%%%%%%%%%%%%%%%%%%%%
\begin{theorem}
\label{th:bound2}
For any mixed entanglement state $\rho$ on Hilbert
space $H_A\otimes H_B$, the unified $(q,s)$-concurrence defined in Eq.(\ref{eq:concurrencemix}) satisfies the following inequality:
 \begin{equation}
\label{eq:the42}
C_{q,s}(\rho) \geq \frac{m^{s(1-q)}-1}{m^{s}-1}\left[(\max\left\{ \|\rho^{T_A}\|_1, \|\mathcal{R}(\rho)\|_1 \right\})^s-1\right],
\end{equation}
where $m\geq 2$, $0<q< 1$ and $0<s<0.9066$.
\end{theorem}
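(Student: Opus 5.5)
Following the proof of Theorem \ref{th:bound1}, we first establish the bound for pure states and then lift it to mixed states by convexity. For $0<q<1$ and $0<qs<1$ we have $\epsilon_{q,s}=-1$, so for $\ket{\psi}$ as in Eq.(\ref{eq:Schmidt}),
\begin{equation*}
C_{q,s}(\ket{\psi})=\Big(\sum_{i=1}^m\lambda_i^q\Big)^s-1 .
\end{equation*}
The target quantity is $\|\rho^{T_A}\|_1=(\sum_i\sqrt{\lambda_i})^2$ from Eq.(\ref{eq:norms1}), so $(\|\rho^{T_A}\|_1)^s-1=(\sum_i\lambda_i^{1/2})^{2s}-1$. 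This corresponds to the case $q=1/2$ with parameter $2s$, that is, to $C_{1/2,2s}(\ket{\psi})$.

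The plan is a monotonicity argument analogous to the one used for $f(q,s)$. Define
\begin{equation*}
g(\lambda)=\frac{(\sum_i\lambda_i^q)^s-1}{(\sum_i\sqrt{\lambda_i})^{2s}-1}.
\end{equation*}
We must show $g\ge \frac{m^{s(1-q)}-1}{m^s-1}$, which is exactly the value of $g$ at the maximally entangled point $\lambda_i=1/m$. For a product state both numerator and denominator vanish, and the inequality is trivial there. To prove it in general, we would use Lagrange multipliers under $\sum_i\lambda_i=1$, as in Theorem \ref{th:bound1}. Equivalently, we can study $H(\lambda)=(\sum_i\lambda_i^q)^s-1-\kappa\big[(\sum_i\sqrt{\lambda_i})^{2s}-1\big]$ with $\kappa=\frac{m^{s(1-q)}-1}{m^s-1}$. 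The steps are:
\begin{enumerate}
\item Show that the only interior stationary point of $H$ is $\lambda_i=1/m$, where $H=0$.
\item Compute the second derivative of $H$ at that point and check its sign. This sign condition is where the numerical restriction $s<0.9066$ should come from, just as the constants $1.1391$ and $2.4721$ arose in Theorem \ref{th:bound1}.
\item Check the boundary faces, where some $\lambda_i=0$. Each face reduces to the same problem with smaller $m$, so it can be handled by induction on $m$, using that $\kappa$ depends monotonically on $m$.
\end{enumerate}
This gives $C_{q,s}(\ket{\psi})\ge\kappa\big[(\|\rho^{T_A}\|_1)^s-1\big]$, and the same bound holds with $\|\mathcal{R}(\rho)\|_1$ because the two norms coincide for pure states.

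For a mixed state, take the optimal decomposition $\rho=\sum_ip_i\ket{\varphi_i}\bra{\varphi_i}$. Then
\begin{equation*}
C_{q,s}(\rho)\ge\kappa\sum_ip_i\big(\|\rho_i^{T_A}\|_1^s-1\big),
\end{equation*}
where $\rho_i=\ket{\varphi_i}\bra{\varphi_i}$. Since $0<s<1$, the map $x\mapsto x^s$ is concave, so Jensen goes the wrong way. We use a different route. For $x\ge1$ and $0<s\le1$ we have $x^s-1\le$ (tangent) only from above, so instead we rely on monotonicity: we want $\sum_ip_ix_i^s\ge(\sum_ip_ix_i)^s$, which fails in general.

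The fix is to absorb this loss into the pure-state step. We prove the stronger pure-state inequality
\begin{equation*}
C_{q,s}(\ket{\psi})\ge\kappa'\big(\|\rho^{T_A}\|_1-1\big)
\end{equation*}
with a linear right-hand side. Then, by Bernoulli's inequality, $x^s-1\le s(x-1)$ for $x\ge1$, so the linear bound implies a bound of the stated form. Concretely, we would show $C_{q,s}(\ket{\psi})\ge\kappa\,(x^s-1)$ and also check that the function $x\mapsto x^s-1$ evaluated along the extremal family is dominated by its chord from $x=1$ to $x=m$, namely $\frac{m^s-1}{m-1}(x-1)$. Given that, we would instead bound $\kappa\frac{m^s-1}{m-1}\big(\sum_ip_i\|\rho_i^{T_A}\|_1-1\big)$ using linearity, then the convexity of the trace norm and $\|\rho^{T_A}\|_1\ge1$ as in Eq.(\ref{eq:oppbound1}). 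Finally, we return to the $s$-power form via monotonicity.

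The main obstacle is twofold. First, the second-order sign analysis at $\lambda_i=1/m$ that produces the threshold $0.9066$. Second, reconciling the concavity of $x^s$ with the averaging step. If the chord argument does not yield exactly $(x^s-1)$, we would present the pure-state bound together with the averaging justified via the chord, and state the range of $s$ where the needed inequality holds.
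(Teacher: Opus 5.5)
You have found the step that breaks. For $s<1$ the map $x\mapsto x^s$ is concave, so $\sum_ip_i(\|\rho_i^{T_A}\|_1^s-1)$ cannot be bounded below by $(\sum_ip_i\|\rho_i^{T_A}\|_1)^s-1$. The paper's proof takes exactly this step by citing ``convexity of $x^s$ with $s\ge1$'', which contradicts its hypothesis $s<0.9066$. Its pure-state reduction to $q=1/2$, via a function decreasing in $q$, is also only valid for $q\le 1/2$.

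Your repair, however, does not close the gap. On $[1,m]$ the concave function $x^s-1$ lies \emph{above} its chord $\frac{m^s-1}{m-1}(x-1)$. So the chord step only gives the linear bound $C_{q,s}(\rho)\ge\frac{m^{s(1-q)}-1}{m-1}(\max\{\|\rho^{T_A}\|_1,\|\mathcal{R}(\rho)\|_1\}-1)$. This is strictly smaller than the claimed right-hand side whenever the norm lies in $(1,m)$, and no monotonicity argument recovers the $s$-power form. The Bernoulli variant needs the pure-state bound $C_{q,s}(\ket{\psi})\ge s\kappa(\|\rho^{T_A}\|_1-1)$, with $\kappa=\frac{m^{s(1-q)}-1}{m^s-1}$. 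This already fails at the maximally entangled state, where it reads $m^{s(1-q)}-1\ge(m^{s(1-q)}-1)\frac{s(m-1)}{m^s-1}$, although $s(m-1)>m^s-1$.

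In fact no argument can work, because the inequality is false on the whole stated parameter range. Take isotropic states on $\mathbb{C}^d\otimes\mathbb{C}^d$ with $m=d$ and $F=\frac{1-t}{d}+t$, where $0<t<1$. Then $\rho_F=(1-t)\rho_{1/d}+t\ket{\Psi^+}\bra{\Psi^+}$ with $\rho_{1/d}$ separable, so convexity (E3) gives $C_{q,s}(\rho_F)\le t(d^{s(1-q)}-1)$. On the other hand, $\|\rho_F^{T_A}\|_1=\|\mathcal{R}(\rho_F)\|_1=dF=1+t(d-1)$. Strict concavity of $x^s$ then gives $\kappa[(1+t(d-1))^s-1]>\kappa t(d^s-1)=t(d^{s(1-q)}-1)$. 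Numerically, for $d=2$, $q=1/4$, $s=1/2$, $F=3/4$, one gets $C_{q,s}(\rho_F)\le 0.1485$, while the claimed lower bound is $0.1611$.

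Independently, your pure-state steps 1--3 must fail for $1/2<q<1$. For Schmidt coefficients $(1-\epsilon,\epsilon)$ one has $C_{q,s}\approx s\epsilon^q$, but $\kappa(\|\rho^{T_A}\|_1^s-1)\approx 2s\kappa\sqrt{\epsilon}$. So your ratio $g$ tends to $0$ near product states.

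What does hold, for $0<q\le 1/2$, is exactly the linear bound your chord argument produces. Monotonicity of the R\'enyi entropy in its order gives $\sum_i\lambda_i^q\ge(\sum_i\sqrt{\lambda_i})^{2(1-q)}$. Hence $C_{q,s}(\ket{\psi})\ge\|\rho^{T_A}\|_1^{s(1-q)}-1$. This dominates $\kappa(\|\rho^{T_A}\|_1^s-1)$, so the threshold $0.9066$ is not needed at the pure-state level. Since $s(1-q)<1$, it also dominates the chord $\frac{m^{s(1-q)}-1}{m-1}(\|\rho^{T_A}\|_1-1)$, which passes through the convex roof. The isotropic states above attain this linear bound.
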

\begin{proof}
    The proof of the first part of Theorem \ref{th:bound2} is similar to that of Theorem \ref{th:bound1}. We can obtain that the function
    \begin{equation*}
        f(q,s)=\frac{(\sum_{i=1}^m\lambda_i^q)^s-1}{m^{s(1-q)}-1}
    \end{equation*}
is a decreasing function with respect to $q$ for $m\geq 2$, $0<q< 1$ and $0< qs<0.9066$.
Hence, we have that
\begin{eqnarray}
\label{eq:purebound2}
    C_{q,s}(\ket{\psi})&\geq&\frac{m^{s(1-q)}-1}{m^{s/2}-1}C_{\frac{1}{2},s}(\ket{\psi})\nonumber\\
    &=&\frac{m^{s(1-q)}-1}{m^{s/2}-1}\left(\Big(\sum_{i=1}^m\sqrt{\lambda_i}\Big)^s-1\right)\nonumber\\
    &\geq&\frac{m^{s(1-q)}-1}{m^{s/2}-1}\cdot\frac{(\sum_{i=1}^m\sqrt{\lambda_i})^{2s}-1}{m^{s/2}+1}\nonumber\\
    &=&\frac{m^{s(1-q)}-1}{m^{s}-1}\Big(\|\rho^{T_A}\|_1^s-1\Big),
\end{eqnarray}
where $\rho=\ket{\psi}\bra{\psi}$, and the last inequality is due to that $1\leq\sum_{i=1}^m\sqrt{\lambda_i}\leq \sqrt{m}$. Since $s\geq 1$, then $(\sum_{i=1}^m\sqrt{\lambda_i})^s\leq (\sqrt{m})^s$.

Let $\rho=\sum_ip_i\ket{\varphi_i}\bra{\varphi_i} $ be
the optimal pure state decomposition for $C_{q,s}(\rho)$. For $m\geq 2$, $0<q< 1$ and $0 < qs<0.9066$, we have that
  \begin{eqnarray}
  \label{eq:oppbound3}
    C_{q,s}(\rho)&=&\sum_ip_iC_{q,s}(\ket{\varphi_i})\nonumber\\
    &\geq&\frac{m^{s(1-q)}-1}{m^{s}-1}\sum_ip_i\Big(\|\rho_i^{T_A}\|_1^s-1\Big)\nonumber\\
    &\geq&\frac{m^{s(1-q)}-1}{m^{s}-1}\Big(\big(\sum_ip_i\|\rho_i^{T_A}\|_1\big)^s-1\Big)\nonumber\\
    &\geq&\frac{m^{s(1-q)}-1}{m^{s}-1}\Big(\|\rho^{T_A}\|_1^s-1\Big),
\end{eqnarray}
where $\rho_i=\ket{\varphi_i}\bra{\varphi_i}$. The first inequality is from Eq.(\ref{eq:purebound2}); the second inequality is obtained from the convexity of the function $f(x)=x^s$ with $s\geq1$; the last inequality is due to the
convex property of the trace norm and $\|\rho^{T_A}\|_1\geq 1$ in Eq.(\ref{eq:trace1}).

Similar to Eq.(\ref{eq:purebound2}) and Eq.(\ref{eq:oppbound3}), we obtain from Eq.(\ref{eq:norms1}) that
\begin{equation}
    \label{eq:oppbound4}
     C_{q,s}(\rho)\geq \frac{m^{s(1-q)}-1}{m^{s}-1}\Big(\|\mathcal{R}(\rho)\|_1^s-1\Big)
\end{equation}
for $m\geq 2$, $0<q< 1$ and $0 < s<0.9066$.
Combining Eq.(\ref{eq:oppbound3}) and Eq.(\ref{eq:oppbound4}), we complete the proof.
\end{proof}
\textit{Remark:} When $s=1$, the bound of Theorem \ref{th:bound1} is consistent with the bound in Ref.\cite{wei2022estimating}. On the other hand, when $s=1$, the bound of Theorem \ref{th:bound2} for either $0<q\leq \frac{1}{2}$ is consistent with the bound in Ref.\cite{wei2022parameterized}.
%%%%%%%%%%%%%%%%%%%%%%%%%%%%%%%%%%%%%%%%%%%%%%%%%%%%%%%%%%%
\begin{example}
A paradigmatic class of states useful for testing entanglement measures are the isotropic states\cite{1999Reduction}. Defined on the Hilbert space $\mathbb{C}^d \otimes \mathbb{C}^d$
, these states exhibit symmetry under simultaneous local unitary transformations of the form
$U \otimes U^*$ for any unitary operator $U$. Parameterized by a fidelity parameter $F\in[0,1]$
 with respect to the maximally entangled state $|\Psi^+\rangle =\frac{1}{\sqrt{d}} \sum_{i=1}^d |ii\rangle$,  an isotropic state is expressed as
\begin{equation}
\rho_F = \frac{1 - F}{d^2-1} \left( I - |\Psi^+ \rangle \langle \Psi^+| \right) + F |\Psi^+ \rangle \langle \Psi^+|,
\label{eq:isotropic_state}
\end{equation}
where $I$ is the identity operator on $\mathbb{C}^d \otimes \mathbb{C}^d$.
The fidelity $F$ of the state $\rho_F$ with respect to $\rho_\Psi =|\Psi^+\rangle\langle\Psi^+|$ is defined
by
$$F=f_{\Psi^+}(\rho_F)=\langle \Psi^+ |\rho_F |\Psi^+ \rangle,$$
where $F$ lies in the range $0\leq F\leq 1$. The
isotropic state $\rho_F$ is separable when $F\leq \frac{1}{d}$.

For isotropic states $\rho_F$, several analytic results are known for entanglement measures, including the entanglement of formation~\cite{2000Entanglement}, the tangle and concurrence \cite{2003Concurrence}, andthe
R{\'e}nyi $\alpha$-entropy entanglement \cite{wang2016entanglement}. Moreover, it has been shown that when $F \leq \frac{1}{d}$, the trace norms of the partial transpose $\rho^{T_A}$ and the realigned matrix $\mathcal{R}(\rho)$ coincide, satisfying
$\|\rho^{T_A}\|_1=\|\mathcal{R}(\rho)\|_1=dF$ \cite{rudolph2005further,vidal2002computable}.

Inspired
by the techniques \cite{yang2021parametrized,2000Entanglement,2003Concurrence,wang2016entanglement}, the unified $(q,s)$-concurrence $C_{q,s}(\rho_F)$ for
these states will be derived by an extremization as follows.

\begin{lemma}
\label{lem:isotropic states}
The unified $(q,s)$-concurrence for isotropic states $\rho_F$ defined on the Hilbert space $\mathbb{C}^d \otimes \mathbb{C}^d$ (with $d \geq 2$) is given by
\begin{equation}
C_{q,s}(\rho_F) = \mathrm{co}[\xi(F, q,s, d)],
\label{eq:qs_concurrence}
\end{equation}
where $F\in(\frac{1}{d}, 1]$ and $\mathrm{co}(\cdot)$ denotes the convex envelope (the largest convex function that is upper bounded by the given function). The function $\xi(F, q,s, d)$ is defined as
\begin{equation}
\xi(F, q,s, d) = 1 - (\gamma^{2q} + (d - 1)\delta^{2q})^s,
\label{eq:xi_function}
\end{equation}
where $q\geq 1$, $qs\geq1$, $\gamma$ and $\delta$ satisfying
\begin{eqnarray*}
 \gamma &=& \frac{1}{\sqrt{d}}\left(\sqrt{F}+\sqrt{(d-1)(1-F)}\right), \\
\delta &=& \frac{1}{\sqrt{d}}\left(\sqrt{F}-\frac{\sqrt{1-F}}{\sqrt{d-1}}\right).
\end{eqnarray*}
\end{lemma}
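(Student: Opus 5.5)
We follow the Terhal--Vollbrecht technique for the entanglement of formation of isotropic states, as adapted to the concurrence and to the $q$-concurrence. The underlying idea is that twirling with $U\otimes U^*$ maps every state to an isotropic state. It preserves the fidelity $F$ with $|\Psi^+\rangle$, and it does not increase $C_{q,s}$. For the last point we use that $C_{q,s}$ is convex (property (E3)) and invariant under local unitaries, since $C_{q,s}$ of a pure state depends only on its Schmidt coefficients.

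\textbf{Step 1: reduction to a pure-state optimization.} For $F\in(\frac1d,1]$ define
\[
R(F)=\min\Big\{ C_{q,s}(\ket{\psi}) : \langle\Psi^+|\psi\rangle\langle\psi|\Psi^+\rangle = F\Big\}.
\]
We will show $C_{q,s}(\rho_F)=\mathrm{co}[R(F)]$.

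For the lower bound, take any decomposition $\rho_F=\sum_i p_i\ket{\psi_i}\bra{\psi_i}$ and let $F_i$ be the fidelity of $\ket{\psi_i}$. Then $\sum_i p_iF_i=F$, so
\[
\sum_i p_iC_{q,s}(\psi_i)\ge \sum_i p_iR(F_i)\ge \mathrm{co}[R](F).
\]

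For the upper bound, suppose $\mathrm{co}[R](F)=\sum_k p_kR(F_k)$ with $\sum_k p_kF_k=F$, and let $\psi_k$ be an optimal pure state for $F_k$. The state $\sigma=\sum_k p_k\ket{\psi_k}\bra{\psi_k}$ has fidelity $F$, so its twirl is $\rho_F$. Convexity and local unitary invariance then give
\[
C_{q,s}(\rho_F)\le C_{q,s}(\sigma)\le \mathrm{co}[R](F).
\]

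\textbf{Step 2: computing $R(F)$.} Write $\ket{\psi}=\sum_i\sqrt{\mu_i}\ket{a_ib_i}$. By choosing local unitaries optimally one gets
\[
\max_{U,V}|\langle\Psi^+|U\otimes V|\psi\rangle|^2=\frac1d\Big(\sum_i\sqrt{\mu_i}\Big)^2 .
\]
So the problem reduces to minimizing $1-(\sum_i\mu_i^q)^s$, that is, maximizing $\sum_i\mu_i^q$ (since $s>0$ and $qs\ge1$), subject to
\[
\sum_i\mu_i=1,\qquad \Big(\sum_i\sqrt{\mu_i}\Big)^2= dF .
\]
Fidelities below the maximal one cannot help, because $R$ is monotone in $F$ on $(\frac1d,1]$.

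Substitute $x_i=\sqrt{\mu_i}$. The Lagrange conditions become $2qx_i^{2q-1}=\alpha+2\beta x_i$ for each $i$. Since $t\mapsto t^{2q-1}$ is convex for $q\ge1$, each such equation has at most two positive roots. Hence at an extremum the $x_i$ take at most two distinct values.

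Following the argument of Terhal--Vollbrecht (and of Rungta--Caves for the tangle), the maximum of the convex function $\sum_i x_i^{2q}$ is attained with one large value $\gamma$ and $d-1$ equal small values $\delta$. The two constraints
\[
\gamma+(d-1)\delta=\sqrt{dF},\qquad \gamma^2+(d-1)\delta^2=1
\]
then give exactly the stated $\gamma$ and $\delta$ (the root with $\gamma\ge\delta$). This yields $R(F)=\xi(F,q,s,d)$.

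\textbf{Main obstacle.} The hardest step is justifying that the $(1,d-1)$ split $\gamma>\delta$ is the global maximizer. The alternatives are the other two-value configurations (several large values) and boundary points where some $\mu_i=0$.

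The first approach to try is this: for a fixed two-value pattern with $k$ large entries, reduce the problem to one variable and show that $\sum_i x_i^{2q}$ is largest at $k=1$, using the convexity of $t^{2q}$ for $q\ge1$. Boundary points, where some $\mu_i=0$, can be handled as the same problem in a lower dimension, where the achievable maximal fidelity is smaller.

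The outer map $u\mapsto1-u^s$ is decreasing, so the minimizer of $\xi$ coincides with the maximizer of the Tsallis-type sum $\sum_i\mu_i^q$. This means the existing $q$-concurrence argument carries over essentially unchanged.
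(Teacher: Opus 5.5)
\textbf{Verdict: genuine gap.} The step that actually produces the formula, namely that the $(n,m)=(1,d-1)$ configuration is the global optimum, is missing.

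Your Step 1 is correct and matches the paper's framework. That is the twirling and convex-roof argument giving $C_{q,s}(\rho_F)=\mathrm{co}[R](F)$, which the paper merely asserts. Your remark that $u\mapsto 1-u^s$ is decreasing, so only $\sum_i\mu_i^q$ must be maximized, is also correct.

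The gap is this. You never show that, among the Lagrange-critical configurations with $n$ Schmidt coefficients $\gamma^2$, $m$ coefficients $\delta^2$ and the rest zero, the choice $n=1$, $m=d-1$ is the global optimum. You attribute it to Terhal--Vollbrecht and Rungta--Caves, but those works treat the von Neumann entropy and the case $q=2$. The $q$-concurrence of Yang et al.\ is formulated for $q\ge 2$. So none of them covers $1<q<2$, and you then list the step yourself as the main obstacle.

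The route you sketch would not close it, for three reasons.
\begin{itemize}
\item For fixed $(n,m)$, the constraints $n\gamma+m\delta=\sqrt{Fd}$ and $n\gamma^2+m\delta^2=1$ already determine $\gamma\ge\delta$. So there is no one-variable problem; the comparison is between different $(n,m)$.
\item Convexity of $t\mapsto t^{2q}$ cannot decide that comparison. $t^p$ is convex for every $p>1$, yet for $1<p<2$ the one-large configuration \emph{minimizes} $\sum_i x_i^p$ under the same two constraints. For example, take $d=3$, $\sum_i x_i=1.5$, $p=1.5$: $(0.908,0.296,0.296)$ gives $\approx 1.187$, while $(0.704,0.704,0.092)$ gives $\approx 1.209$. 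The order reverses for $p>2$. What matters is a third-order property, convexity of $t^{2q-1}$, and it must be used globally, not only to count roots of the Lagrange equation.
\item Treating zero coefficients as a lower-dimensional problem only settles feasibility ($n+m\ge Fd$). It does not show why such configurations lose to $(1,d-1)$.
\end{itemize}

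The paper supplies exactly this comparison. It relaxes $n,m$ to real variables on the parallelogram $1\le n\le Fd\le n+m\le d$. It obtains $\partial\gamma/\partial n$, $\partial\delta/\partial n$, $\partial\gamma/\partial m$, $\partial\delta/\partial m$ by implicitly differentiating the two constraints. It then proves two monotonicity facts:
\begin{itemize}
\item $\partial C_{q,s}/\partial m\le 0$: more small nonzero coefficients lower $C_{q,s}$, which disposes of the zeros.
\item $\partial C_{q,s}/\partial u\le 0$ for $u=m-n$ at fixed $n+m$. This uses $f(x)=(1-q)x^{2q}+qx^{2q-1}-2qx-1\le f(1)=-2q$ for $x=\gamma/\delta\ge 1$.
\end{itemize}
Together these push the minimum to the vertex $n=1$, $m=d-1$. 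You need this, or an equivalent equal-variable argument built on convexity of $t^{2q-1}$, for the proof to be complete.

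A smaller unproven point is your claim that $R$ is monotone in $F$, which you use to replace ``some local unitary achieves fidelity $F$'' by ``maximal fidelity equals $F$''. It is easily repaired. Move the Schmidt vector along $(1-t)\mu+t\,e_1$, with $e_1$ the unit vector at the largest entry of $\mu$. By majorization this does not decrease $\sum_i\mu_i^q$, while $\sum_i\sqrt{\mu_i}$ decreases continuously to $1$.
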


The evaluation is primarily algebraic, though involved, as detailed in Appendix \ref{sec:Appendix A}. We now demonstrate the tightness of the lower bound in Eq.(\ref{eq:the41}) by applying it to isotropic states.

(i) For the special case $q=2$, $s=2$ and $d=2$, the expression in Eq.(\ref{eq:xi_function}) reduces to
 \begin{equation}
 \xi(F,2,2,2) =1-\left(\frac{1}{2}+2F(F-1)\right)^2,
\label{eq:xi_function21}
 \end{equation}
which holds for \( F\in (\frac{1}{2}, 1]\).

For \( F\geq 0.908\), we find \(\frac{d^2}{d F^2} \xi(F,2,2,2) < 0\), indicating that \(\xi(F, 2,2,2)\) is no longer convex on the interval \(F \in (0.908, 1]\). As \(C'_{2,2}(\rho_F)\) is defined as the largest convex function that is bounded above by Eq.(\ref{eq:qs_concurrence}), we ensure convexity by linearly interpolating between the points at \(F = 0.908\) and \(F = 1\). Therefore,
\begin{equation*}
    C'_{2,2}(\rho_F) =
\begin{cases}
0, & F \leq 0.5, \\
1-\left(\frac{1}{2}+2F(F-1)\right)^2, & 0.5 < F \leq 0.908, \\
2.119F-1.369, & 0.908 < F \leq 1.
\end{cases}
\end{equation*}
This result agrees exactly with the lower bound given in Eq.(\ref{eq:the41}) for $0.5<F\leq 0.908$.

(ii) For the special case $q=2$, $s=2$ and $d=3$, Eq.(\ref{eq:xi_function}) becomes
\begin{equation}
\xi(F, 2,2, 3) = 1 - (\gamma^{4} + 2\delta^{4})^2,
\label{eq:xi_function223}
\end{equation}
with the parameters $\gamma$ and $\delta$ satisfying
\begin{eqnarray*}
 \gamma &=& \frac{1}{\sqrt{3}}\left(\sqrt{F}+\sqrt{2(1-F)}\right), \\
\delta &=& \frac{1}{\sqrt{3}}\left(\sqrt{F}-\sqrt{\frac{1-F}{2}}\right),
\end{eqnarray*}
where $F\in(\frac{1}{3},1]$.
\begin{figure*}[ht]
\centering
    \includegraphics[width=0.65\linewidth]{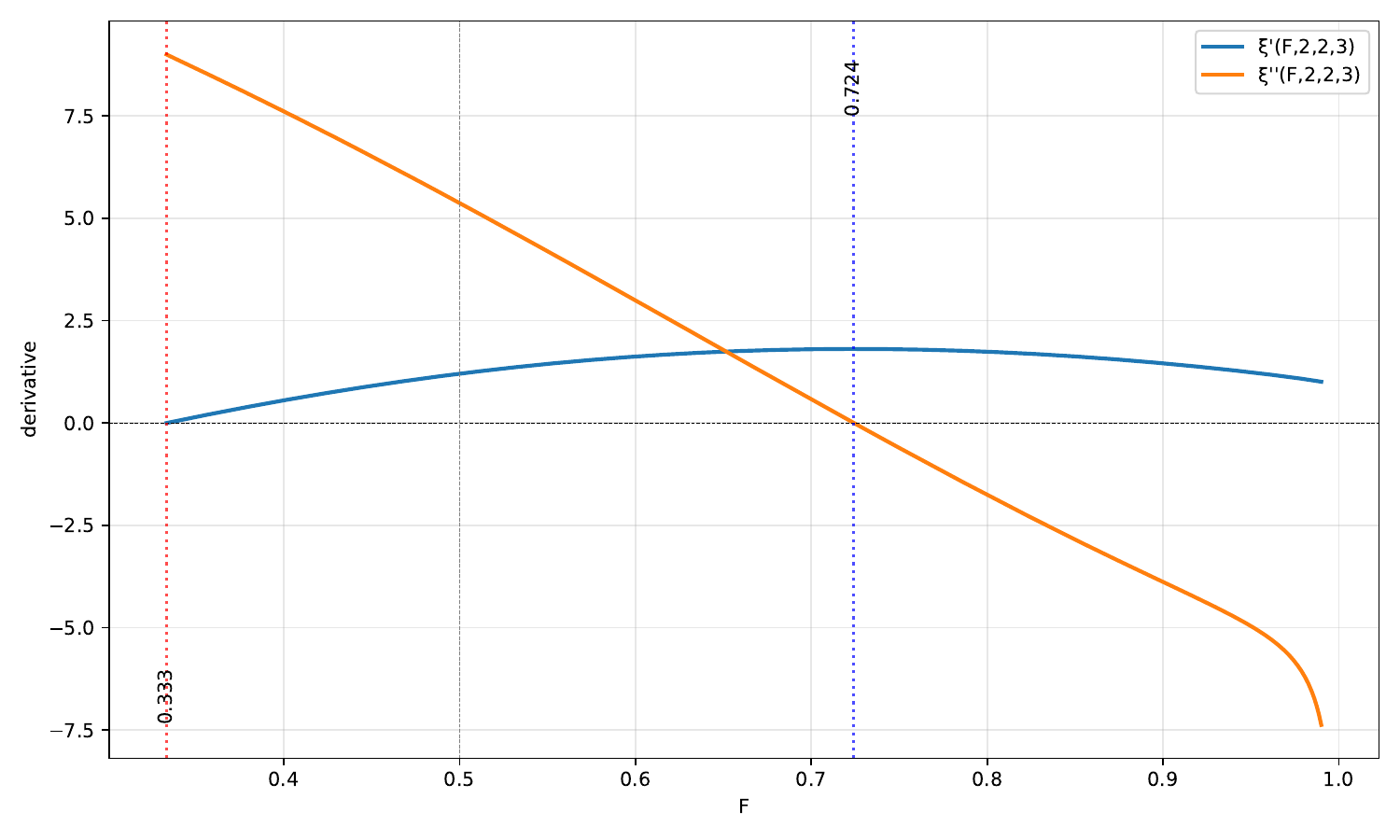}
     \caption{First and second derivatives of $\xi(F, 2,2, 3)$ with respect to $F$.}
    \label{fig:figure223}
\end{figure*}

Since \(\frac{d}{d F} \xi(F, 2,2, 3)>0\), the function \(\xi(F, 2,2, 3)\) is monotonically increasing in the region where \(\rho_F\) is entangled, as illustrated in Figure \ref{fig:figure223}. For \(F \geq 0.724\), we find \(\frac{d^2}{d F^2} \xi(F, 2,2, 3) < 0\), indicating that \(\xi(F, 2,2, 3)\) is no longer convex on the interval \(F \in (0.724, 1]\). As \(C_{2,2}(\rho_F)\) is defined as the largest convex function that is bounded above by Eq.(\ref{eq:qs_concurrence}), we ensure convexity by linearly interpolating between
the points at \(F = 0.724\) and \(F = 1\). Therefore,
\begin{equation}
\label{eq:c22function}
    C_{2,2}(\rho_F) =
\begin{cases}
0, & F \leq 1/3, \\
\xi(F, 2,2, 3), & 1/3 < F \leq 0.724, \\
1.52F-0.63, & 0.724 < F \leq 1.
\end{cases}
\end{equation}

From Theorem \ref{th:bound1}, we derive the following lower bound
\begin{equation}
\label{eq:c22bound}
    C_{2,2}(\rho_F)\geq 1-\left(\frac{5}{6}-\frac{3F^2-2F}{2}\right)^2.
\end{equation}

The $q$-concurrence of the isotropic state was derived in Ref.\cite{yang2021parametrized},
where it is given by
\begin{equation}
\label{eq:c2function}
C_2 (\rho_F) =
\begin{cases}
0, & F \leq 1/3, \\
1-\gamma^4-2\delta^4, &1/3< F \leq 8/9\\
\frac{3F}{2}-\frac{5}{6}, &8/9< F\leq 1.
\end{cases}
\end{equation}
for $q = 2$ and $d = 3$.

As depicted in Figure \ref{fig:figure22}, this lower bound in Eq.(\ref{eq:c22bound}) is shown to
be tight. Additionally, we compare the unified $(q,s)$-concurrence with the $q$-concurrences for the isotropic state at $q=2$. The value of $C_{2,2}(\rho_F)$
 is greater than $C_2(\rho_F)$.
\begin{figure*}[ht]
\centering
    \includegraphics[width=0.67\linewidth]{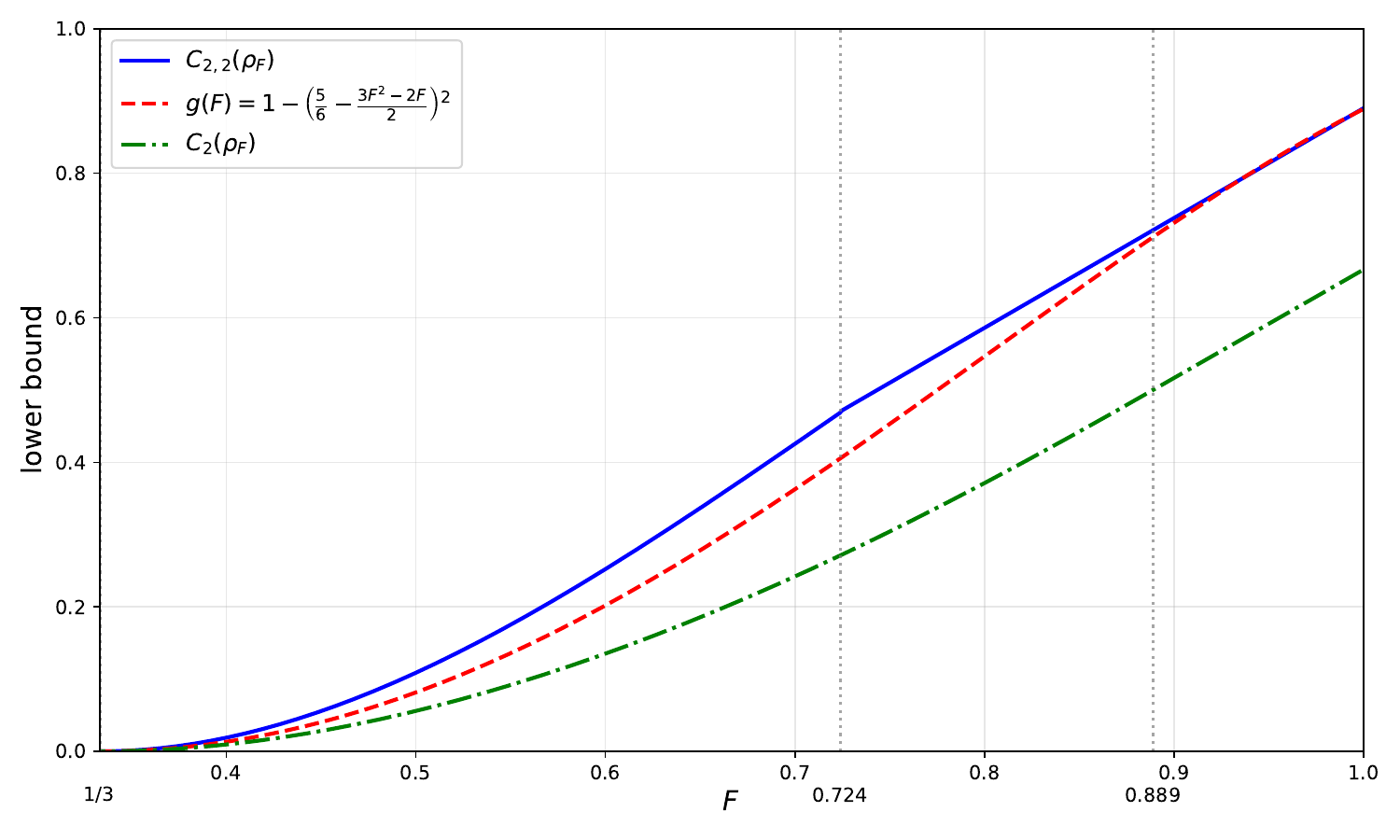}
     \caption{Entanglement measures for isotropic states: the exact value from Eq.(\ref{eq:c22function}) (blue solid); the lower bound from Eq.(\ref{eq:c22bound}) (red dashed); and the
$C_2$-concurrence from Eq.(\ref{eq:c2function}) (green dashed).}
    \label{fig:figure22}
\end{figure*}
\end{example}
%%%%%%%%%%%%%%%%%%%%%%%%%%%%%%%%%%%%%%%%%%%%%%%%%%%%%%%%%%%%%%%%%%%%%%
\begin{example}
For Werner states, the general expression is given by
\begin{eqnarray}
    \rho_{w}&=&\frac{2(1-w)}{d(d+1)} \left( \sum_{i=1}^{d} |ii\rangle \langle ii| + \sum_{i<k} |\Phi_{ik}^+ \rangle \langle \Phi_{ik}^+| \right) \nonumber\\
    &\quad&+ \frac{2w}{d(d-1)} \sum_{i<k} |\Phi_{ik}^- \rangle \langle \Phi_{ik}^-|,
\label{eq:werner_state}
\end{eqnarray}
where the states $|\Phi_{ik}^\pm \rangle$ are defined as $|\Phi_{ik}^\pm \rangle = (|ik\rangle \pm |ki\rangle)/\sqrt{2}$, with $|ik\rangle$ and $|ki\rangle$ representing standard basis states in the Hilbert space of the two subsystems.
The mixing parameter $w$ is given by the trace over the antisymmetric subspace:
\begin{equation*}
w = \tr\left(\rho_w \sum_{i<k} |\Phi_{ik}^- \rangle \langle \Phi_{ik}^-|\right),
\end{equation*}
as established in Ref.~\cite{lee2003convex}. The Werner state $\rho_w$ is separable if and only if $0 \leq w \leq \frac{1}{2}$~\cite{vollbrecht2001entanglement,werner1989quantum}.
For $w > \frac{1}{2}$, the state is entangled, and its entanglement measure can be explicitly expressed as (see Appendix~\ref{sec:Appendix B} for derivation):
\begin{equation}
\xi(\rho_w) =  1 - \left[\left( \frac{1+G}{2}\right)^q +\left( \frac{1-G}{2} \right)^q \right]^s,
\label{eq:entanglement_measure1}
\end{equation}
where $G = 2\sqrt{w(1-w)}$, $q>1$ and $qs\geq 1$. This expression quantifies the entanglement of the Werner state for $w > \frac{1}{2}$ and depends on both the mixing parameter $w$ and the exponents $q$ and $s$.

We now consider the specific case $d = 2$ and
$q = 3$ and $s=2$. In this case,  Eq.(\ref{eq:entanglement_measure1}) simplifies to
\begin{equation}
\xi(\rho_w) = \frac{3}{16} \left[(2w-1)^2 (8-3(2w-1)^2) \right].
\label{eq:entanglement_measure2}
\end{equation}
\begin{figure*}[ht]
\centering
    \includegraphics[width=0.65\linewidth]{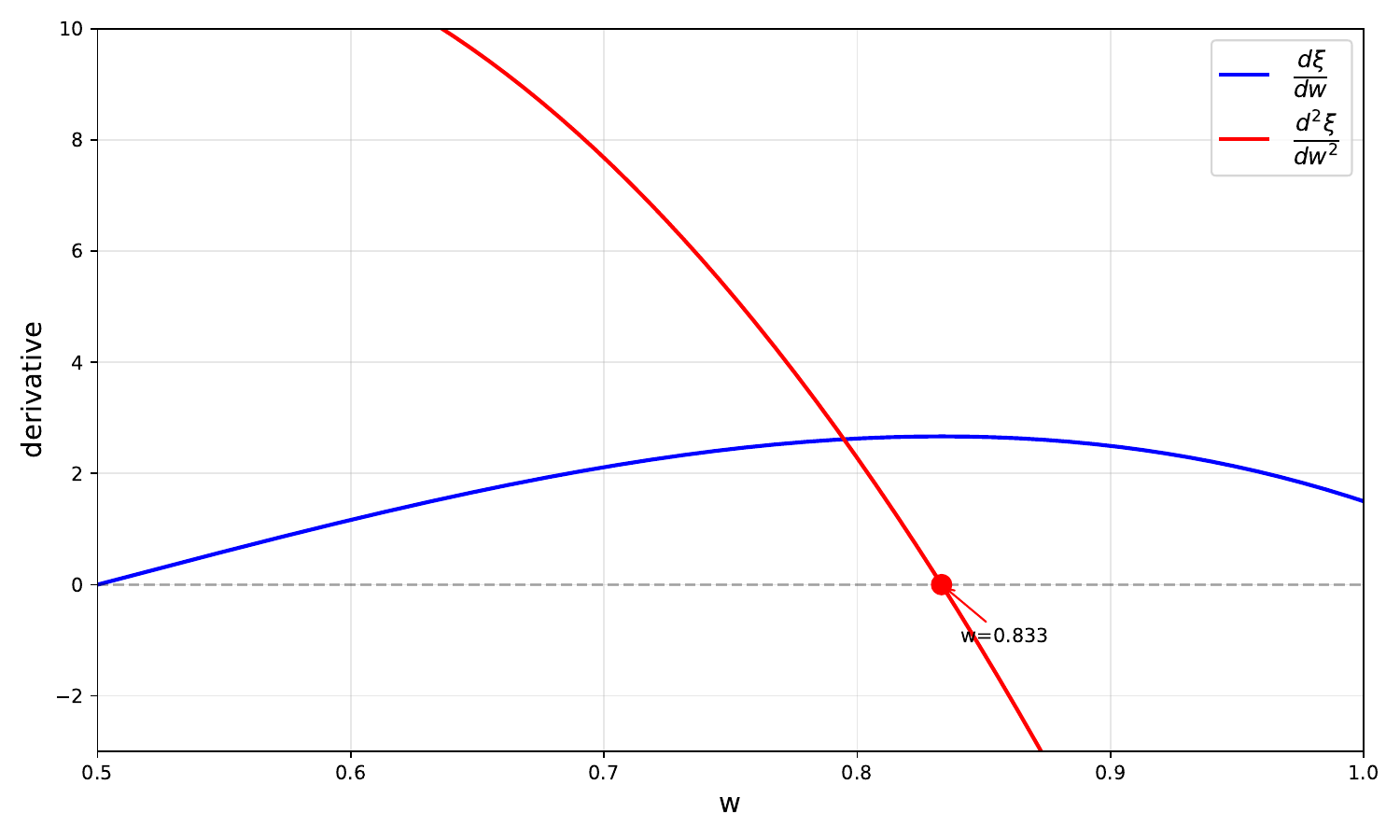}
     \caption{First and second derivatives of $\xi(\rho_w)$ with respect to $w$.}
    \label{fig:figure232}
\end{figure*}
Since \(\frac{d}{d w} \xi(\rho_w)>0\) for $w>\frac{1}{2}$, the function \(\xi(\rho_w)\) is monotonically increasing, as illustrated in Figure \ref{fig:figure232}. For \(w \geq 0.833\), we find \(\frac{d^2}{d w^2} \xi(\rho_w) < 0\), indicating that indicating that \(\xi(\rho_w)\) is not convex on \(w \in (0.833, 1]\).
As \(C_{3,2}(\rho_w)\) is defined as the largest convex function bounded above by Eq.(\ref{eq:qs_concurrence}), convexity is restored by linear interpolation between
the points at \(w = 0.833\) and \(w = 1\). Therefore,
we obtain the following result,
\begin{equation}
\label{eq:c32function}
    C_{3,2}(\rho_w) =
\begin{cases}
0, & w \leq 1/2, \\
\xi(\rho_w), & 1/2 < w \leq 0.833, \\
2.29w-1.35, & 0.833 < w \leq 1.
\end{cases}
\end{equation}

From Theorem \ref{th:bound1}, we obtain the following lower bound
\begin{equation}
\label{eq:c32bound}
    C_{3,2}(\rho_w)\geq \frac{5}{4}-\frac{5}{4}\left(\frac{1}{2}-2w^2+2w\right)^2.
\end{equation}

\begin{figure*}[ht]
\centering
    \includegraphics[width=0.67\linewidth]{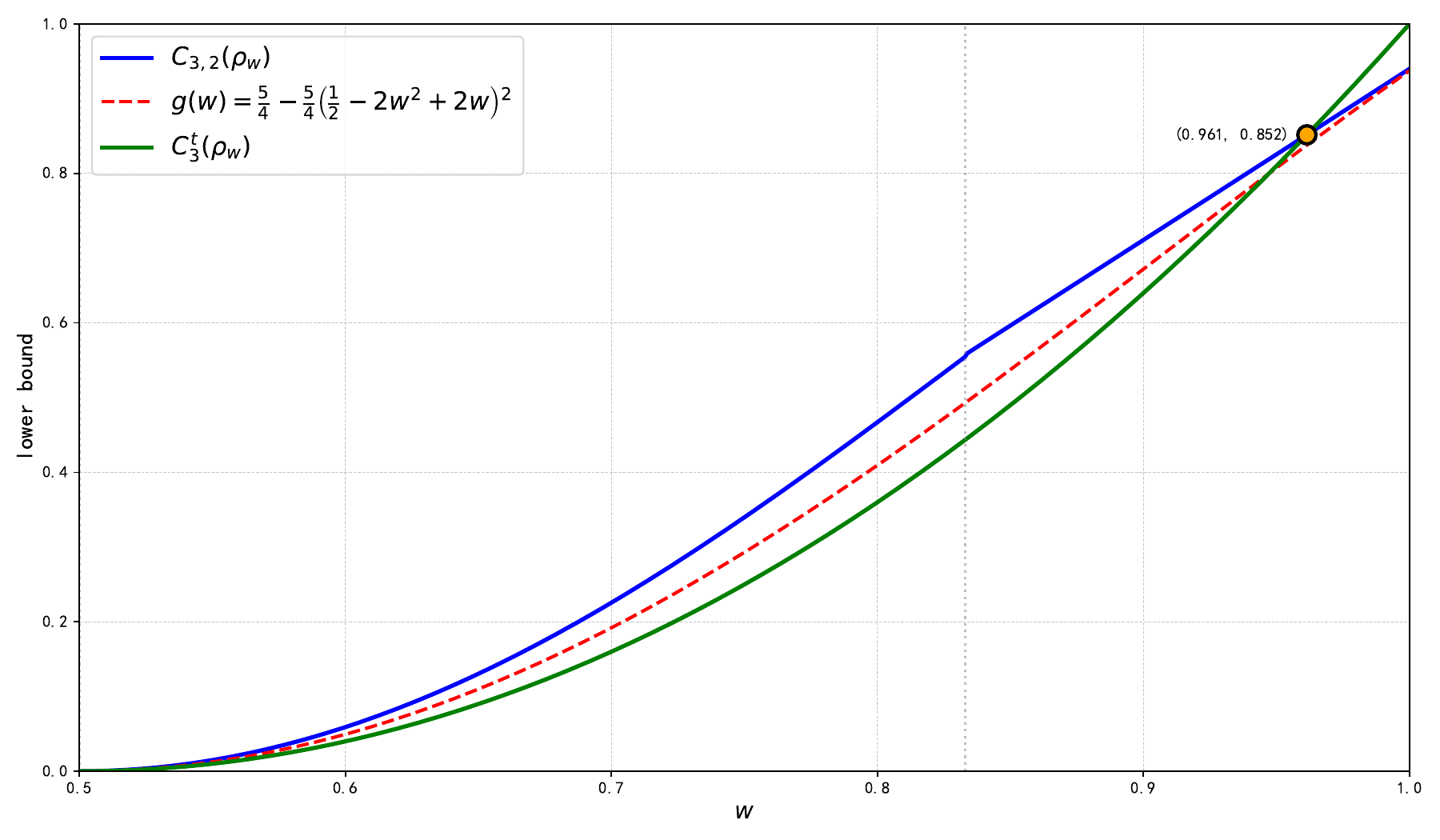}
     \caption{Entanglement measures for Werner states: the exact value from Eq.(\ref{eq:c32function}) (blue solid); the lower bound from Eq.(\ref{eq:c32bound}) (red dashed); and the
$C_3^t$-concurrence from Eq.~(\ref{eq:c3tfunction}) (green solid).}
    \label{fig:figurec232}
\end{figure*}

The concurrence of Werner states was derived in Ref.\cite{xuan2025new},
where it is given by
\begin{equation}
\label{eq:c3tfunction}
C_3^t (\rho_w) =
\begin{cases}
0, & w \leq \frac{1}{2}, \\
(2w - 1)^2, & w > \frac{1}{2}
\end{cases}
\end{equation}
for $q = 3$ and $d = 2$.

As shown in Figure \ref{fig:figurec232}, the lower bound in Eq.(\ref{eq:c32bound}) is tight.
Additionally, we compare the unified $(q,s)$-concurrence with the $C_q^t$-concurrences for Werner states at $q=3$. The value of $C_{3,2}(\rho_w)$
 is greater than $C_3^t(\rho_w)$
for $0.5\leq w \leq 0.961$.
\end{example}
%%%%%%%%%%%%%%%%%%%%%%%%%%%%%%%%%%%%%%%%%%%%%%%%%%%%%%%
\section{Entanglement inequalities based on the unified $(q,s)$-concurrence in
multipartite systems}
\label{sec:Monogamy}
\subsection{Monogamy of the unified $(q,s)$-concurrence in
multipartite systems}
The distribution of entanglement represents a significant aspect of entanglement theory and quantum information processing. Monogamy relations capture a fundamental property of entanglement, characterizing the constraints on how bipartite entanglement can be shared among multiple parties within a quantum system. The Coffman-Kundu-Wootters (CKW) inequality provides a quantitative formulation of this monogamy property~\cite{coffman2000distributed}.

Consider a pure state $|\phi\rangle_{AB}$ in the bipartite system $\mathbb{C}^2 \otimes \mathbb{C}^d$, where subsystem $A$ is a qubit ($\mathbb{C}^2$) and subsystem $B$ is a finite-dimensional Hilbert space ($\mathbb{C}^d$). Using the Schmidt decomposition, the state $|\phi\rangle_{AB}$ can be expressed as
\begin{equation*}
|\phi\rangle_{AB} = \sqrt{\lambda_0}|0\rangle|\phi_0\rangle + \sqrt{\lambda_1}|1\rangle|\phi_1\rangle,
\end{equation*}
where $\lambda_0$ and $\lambda_1$ are the Schmidt coefficients satisfying $\lambda_0 + \lambda_1 = 1$, and $\{|\phi_i\rangle\}$ are orthogonal states in subsystem $B$ (i.e., $\langle\phi_i|\phi_j\rangle = \delta_{ij}$).

From Eq.\eqref{eq:epsilonqs}, the normalized unified
$(q,s)$-concurrence for the pure state $|\phi\rangle_{AB}$ can be defined as
\begin{equation}
\hat{C}_{q,s}(|\phi\rangle_{AB}) = \frac{1}{\epsilon_{q,s}(1 - 2^{s(1-q)})}C_{q,s}(\rho_{A})=\frac{1-(\lambda_0^q+\lambda_1^q)^s}{ 1 - 2^{s(1-q)}},
\label{eq:concurrence_relation}
\end{equation}
where $\rho_A = \tr_B(|\phi\rangle_{AB}\langle\phi|)$ is the reduced density matrix of subsystem $A$. Note that the normalization factor $\frac{1}{\epsilon_{q,s}(1 - 2^{s(1-q)})}$ ensures proper scaling of the concurrence measure.

Furthermore, the concurrence for the pure state $|\phi\rangle_{AB}$ is given by \cite{rungta2001universal,hill1997entanglement}
\begin{equation}
C(|\phi\rangle_{AB}) = \sqrt{2(1 - \mathrm{tr}(\rho_A^2))} = 2\sqrt{\lambda_0 \lambda_1},
\end{equation}
where $\rho_A=\mathrm{tr}_B(|\phi\rangle_{AB}\langle\phi|)$ is the reduced density matrix. A one-to-one correspondence exists between the Schmidt coefficients $\lambda_0, \lambda_1$ and the concurrence $C(|\phi\rangle_{AB})$. The Schmidt coefficients can be expressed in terms of the concurrence as
\begin{equation*}
\lambda_{0,1} = \frac{1 \pm \sqrt{1 - C^2(|\phi\rangle_{AB})}}{2}.
\end{equation*}
From Eq.\eqref{eq:concurrence_relation}, it follows that
\begin{equation}
\hat{C}_{q,s}(|\phi\rangle_{AB})=h_{q,s}(C(|\phi\rangle_{AB})),
\label{eq:functional_relation}
\end{equation}
where $h_{q,s}(x)$ is an analytic function defined as
\begin{eqnarray}
    \label{eq:hq_function}
    h_{q,s}(x)&=& \frac{1 - \left(\left(\frac{1 + \sqrt{1 - x^2}}{2}\right)^q
    + \left(\frac{1 - \sqrt{1 - x^2}}{2}\right)^q\right)^s}{1 - 2^{s(1-q)}} \nonumber\\
    &=&\frac{2^{qs}+g_{q,s}(x)}{2^{qs}(1 - 2^{s(1-q)})},
\end{eqnarray}
where $g_{q,s}(x)=-\left[(1+\sqrt{1 - x^2})^q + (1-\sqrt{1 - x^2})^q\right]^s$
with $0 \leq x \leq 1$. Thus for any $\mathbb{C}^2 \otimes \mathbb{C}^d$ pure state $\ket{\phi}_{AB}$, we have a functional relation
between its concurrence and the normalized $(q,s)$-concurrence for $q$ and $s$ in Eq.(\ref{eq:epsilon_qs}).

\begin{theorem}
\label{th:functional_relation}
 For $q\geq1$, $0 \leq s \leq 1$ and $1\leq qs \leq 3$ and any qubit-qudit state $\rho_{AB}$,  the normalized unified $(q,s)$-concurrence $\hat{C}_{q,s}(\rho_{AB})$ is related to the concurrence $C(\rho_{AB})$ by
\begin{equation}
\hat{C}_{q,s}(\rho_{AB}) = h_{q,s}(C(\rho_{AB})),
\label{eq:functional_relation}
\end{equation}
where $h_{q,s}(x)$ is an analytic function defined in Eq.(\ref{eq:hq_function}).
\end{theorem}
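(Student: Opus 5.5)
The proof follows the standard argument used for Tsallis-$q$ entanglement and for the $q$-concurrence: show that $h_{q,s}$ is monotonically increasing and convex on $[0,1]$, and then combine this with the fact that the concurrence of a $2\otimes d$ state admits an optimal decomposition in which every pure state has the same concurrence.

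\textbf{Step 1: the inequality $\hat{C}_{q,s}(\rho_{AB})\ge h_{q,s}(C(\rho_{AB}))$.} Take an optimal decomposition $\rho_{AB}=\sum_i p_i|\phi_i\rangle\langle\phi_i|$ for $\hat{C}_{q,s}$. By the pure-state relation (\ref{eq:functional_relation}), each term equals $h_{q,s}(C(|\phi_i\rangle))$. If $h_{q,s}$ is convex, then
\[
\hat{C}_{q,s}(\rho_{AB})=\sum_i p_i h_{q,s}(C(|\phi_i\rangle))\ge h_{q,s}\Big(\sum_i p_i C(|\phi_i\rangle)\Big).
\]
If $h_{q,s}$ is also monotonically increasing, this is at least $h_{q,s}(C(\rho_{AB}))$, because $\sum_i p_i C(|\phi_i\rangle)\ge C(\rho_{AB})$ by the convex-roof definition of $C$.

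\textbf{Step 2: the reverse inequality.} For qubit-qudit states, a $2\otimes d$ state is supported on a $2\otimes 2$ subspace of the $B$ side, which is the setting where Wootters' construction applies. That construction gives an optimal decomposition $\{r_j,|\psi_j\rangle\}$ for $C(\rho_{AB})$ in which every $|\psi_j\rangle$ has $C(|\psi_j\rangle)=C(\rho_{AB})$. I would cite this rather than re-derive it, as the earlier literature does. Then
\[
\hat{C}_{q,s}(\rho_{AB})\le \sum_j r_j h_{q,s}(C(|\psi_j\rangle))=h_{q,s}(C(\rho_{AB})),
\]
and together with Step 1 this gives equality.

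\textbf{Step 3: monotonicity and convexity of $h_{q,s}$ (the main obstacle).} It suffices to study $g_{q,s}(x)$, since the prefactor $1/(2^{qs}(1-2^{s(1-q)}))$ is positive for $q>1$, $s>0$. The case $q=1$ needs a separate comment: either it is a limit, or $s\ge1$ is then forced and the factor degenerates, so I would treat it by continuity.

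1. Substitute $t=\sqrt{1-x^2}$ and write $u=(1+t)^q+(1-t)^q$, so that $g=-u^s$. Then $dt/dx=-x/t$ and $u'(t)=q[(1+t)^{q-1}-(1-t)^{q-1}]\ge0$. This gives $dg/dx=s u^{s-1}u'(t)\,x/t\ge0$, so $h_{q,s}$ is increasing.
2. For convexity, compute $d^2g/dx^2$. It has the form $s u^{s-2}$ times the expression
\[
(s-1)u'^2 x^2/t^2 + u\bigl(u''x^2/t^2 + u'/t^3\bigr),
\]
using $d^2t/dx^2=-1/t^3$.
3. Since $s\le1$, the first term is nonpositive, so the task is to show the bracket is nonnegative for $t\in(0,1]$ when $1\le qs\le3$.
4. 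I would first check the endpoints. As $t\to0$, expand in powers of $t$. At $t=1$, i.e. $x=0$, the term $u'/t^3$ dominates and is positive.
5. I would then reduce the bracket to a one-variable inequality in $t$, either by series expansion in $t^2$ or by a sign analysis. I expect the constraint $qs\le3$ to be exactly where the small-$t$ leading coefficient changes sign, mirroring the known restriction $1\le q\le 4$ in the $q$-concurrence case.

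If the analytic inequality resists, the fallback is to establish it for the boundary values $qs=1$ and $qs=3$ and argue monotonicity of the bracket in the parameters, supplemented by numerical verification as is customary in this literature.
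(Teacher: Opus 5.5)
\textbf{The gap is in Step 2.} Step 1 is the same as the second half of the paper's proof and is fine for every qubit--qudit state.

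Step 2 relies on the claim that a $2\otimes d$ state is supported on a $2\otimes 2$ subspace of the $B$ side. That is true only for pure states, which have Schmidt rank at most two. A mixed qubit--qudit state can have $\rho_B$ of rank up to $d$; for example, $\tfrac13(|00\rangle\langle00|+|01\rangle\langle01|+|12\rangle\langle12|)$ has $\operatorname{rank}\rho_B=3$. Wootters' optimal decomposition, in which every member has concurrence $C(\rho_{AB})$, is a genuinely two-qubit result built on the spin-flip structure of $\mathbb{C}^2\otimes\mathbb{C}^2$. I do not know of any counterpart for $d\ge 3$.

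Your Steps 1--2 therefore prove $\hat{C}_{q,s}(\rho_{AB})=h_{q,s}(C(\rho_{AB}))$ only in two cases:
\begin{itemize}
\item two-qubit mixed states;
\item pure $2\otimes d$ states, where it follows directly from the Schmidt decomposition.
\end{itemize}
For general qubit--qudit mixed states, only the inequality $\hat{C}_{q,s}\ge h_{q,s}(C)$ from Step 1 survives.

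The paper does not close this direction either. It takes an optimal decomposition for $C$ and asserts $h_{q,s}\bigl(\sum_ip_iC(|\phi_i\rangle)\bigr)=\sum_ip_ih_{q,s}\bigl(C(|\phi_i\rangle)\bigr)$ by ``linearity'' of $h_{q,s}$. That is false, since already $h_{2,1}(x)=x^2$. Your equal-concurrence decomposition is exactly the ingredient that makes that step valid, but only for two qubits. You should restrict the mixed-state claim to two-qubit $\rho_{AB}$. That is also all the subsequent monogamy theorem uses, because the mixed states fed into the relation there are the two-qubit marginals $\rho_{A_1A_i}$.

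\textbf{Step 3 also needs correcting.} The paper does no work here: it imports monotonicity and convexity of $g_{q,s}$ on this parameter range from Kim and Sanders \cite{san2011unified}, and you can do the same instead of falling back on numerics.

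If you derive it yourself, your second derivative has a sign error. From $g=-u^s$, $t'=-x/t$ and $t''=-1/t^3$ one gets
\[
g_{q,s}''(x)=s\,u^{s-2}\Bigl[(1-s)\,u'^2\,\frac{x^2}{t^2}+u\Bigl(\frac{u'}{t^3}-u''\,\frac{x^2}{t^2}\Bigr)\Bigr].
\]
So the first two terms of your bracket have the wrong sign. The $s\le 1$ term actually helps convexity, and the term you must control is $-u\,u''x^2/t^2$, with $u''\ge0$. As written, you would be proving a different inequality.

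Your expectation about the small-$t$ endpoint is also off. As $t\to0$ the correct bracket tends to $4q(q-1)\bigl[1-\tfrac{(q-2)(q-3)}{3}+(1-s)q(q-1)\bigr]$. At $s=1$ this stays positive up to $q=(5+\sqrt{13})/2$, so the condition $qs\le 3$ does not come from that expansion.
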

\begin{proof}
For $q\geq1$, $0 \leq s \leq 1$ and $1\leq qs \leq 3$, the function $g_{q,s}(x)$ is monotonically increasing and convex on the interval $x \in (0,1)$~\cite{san2011unified}. This property implies that $h_{q,s}(x)$ is monotonically increasing and convex on the interval $x \in (0,1)$.

For a mixed state $\rho_{AB}$ in $\mathbb{C}^2 \otimes \mathbb{C}^d$,  there exists an optimal decomposition of $\rho_{AB}=\sum_i p_i |\phi_i\rangle_{AB}\langle\phi_i|$, where the pure states
$|\phi_i\rangle_{AB}$ achieve the convex-roof extension of the concurrence:
\begin{equation}
C(\rho_{AB}) = \sum_i p_i C(|\phi_i\rangle_{AB}).
\label{eq:concurrence_decomp}
\end{equation}

Using this decomposition, we can derive the following inequality:
\begin{eqnarray}
h_{q,s}\left(C(\rho_{AB})\right) &=& h_{q,s}\Big(\sum_i p_i C(|\phi_i\rangle_{AB})\Big) \nonumber \\
&=& \sum_i p_i h_{q,s}\Big(C(|\phi_i\rangle_{AB})\Big) \nonumber \\
&=& \sum_i p_i \hat{C}_{q,s}(|\phi_i\rangle_{AB}) \nonumber \\
&\geq& \hat{C}_{q,s}\left(\rho_{AB}\right),
\label{eq:Relationinequality1}
\end{eqnarray}
where the second equality follows from the linearity of the function $h_{q,s}$, and the inequality results from the definition of $\hat{C}_{q,s}$ as an infimum over all decompositions.

Conversely, consider the optimal decomposition $\rho_{AB}=\sum_j q_j |\mu_j\rangle_{AB}\langle\mu_j|$ for the normalized unified $(q,s)$-concurrence $\hat{C}_{q,s}(\rho_{AB})$. Therefore, we have that
\begin{eqnarray}
\label{eq:Relationinequality2}
\hat{C}_{q,s}(\rho_{AB}) &=& \sum_j q_j \hat{C}_{q,s}(|\mu_j\rangle_{AB}) \nonumber \\
&=& \sum_j q_j h_{q,s}(C(|\mu_j\rangle_{AB})) \nonumber \\
&\geq& h_{q,s}\Big(\sum_j q_j C(|\mu_j\rangle_{AB})\Big) \nonumber \\
&\geq& h_{q,s}(C(\rho_{AB})),
\end{eqnarray}
where the first inequality follows from the convexity of $h_{q,s}(x)$  and the second uses its monotonicity along with the fact that $\sum_j q_j C(|\mu_j\rangle_{AB})\geq C(\rho_{AB})$.

Combining inequalities~(\ref{eq:Relationinequality1}) and~(\ref{eq:Relationinequality2}), we conclude that
\begin{equation}
\hat{C}_{q,s}(\rho_{AB}) = h_{q,s}(C(\rho_{AB}))
\label{eq:main_result}
\end{equation}
for $q \geq 1$, $0 \leq s \leq 1$, $1\leq qs \leq 3$ and any mixed state $\rho_{AB}$.
\end{proof}
%%%%%%%%%%%%%%%%%%%%%%%%%%%%%%%%%%%%%%%%%%%%%%%%%%%%%%
It has been demonstrated that the concurrence follows
a monogamy relation for any $n$-qubit state $\rho_{A_1 A_2 \cdots A_n}$ \cite{osborne2006general}, expressed as:
\begin{equation}
\label{eq:monogamyc2}
  C^2(\rho_{A_1|A_2 \cdots A_n}) \geq C^2(\rho_{A_1|A_2}) + \cdots + C^2(\rho_{A_1| A_n}).
\end{equation}
where \(C^2(\rho_{A_1|A_2 \cdots A_n})\) denotes the concurrence with respect to the bipartition between subsystem \(A_1\) and the remaining subsystems
\(A_2, \cdots, A_n\), and \( C^2(\rho_{A_1|A_i})\) is the concurrence of the reduced density matrix \(\rho_{A_1|A_i}\) for each \(i = 2, \cdots, n\). In this work, we demonstrate that this entanglement monogamy can also be characterized using the normalized unified $(q,s)$-concurrence.

\begin{theorem}
  For any multi-qubit state \(\rho_{A_{1} A_{2} \cdots A_{n}}\) and parameters satisfying \(q \geq 2\), \(0 \leq s \leq 1\), and \(1\leq qs \leq 3\), the following inequality holds:
  \begin{equation}
  \label{eq:inequalitymonogamy}
      \hat{C}_{q,s}\left( \rho_{A_1|A_2 \cdots A_n} \right) \geqslant \hat{C}_{q,s}( \rho_{A_1| A_2} ) + \cdots + \hat{C}_{q,s}( \rho_{A_1| A_n} ),
  \end{equation}
where \(\hat{C}_{q,s}\left( \rho_{A_1 |A_2 \cdots A_n} \right)\) denotes the normalized unified $(q,s)$-concurrence across the bipartition \(A_1\) versus \(A_2 \cdots A_n\), and \(\hat{C}_{q,s}( \rho_{A_1|A_i} )\) is the normalized unified $(q,s)$-concurrence of
the reduced state \(\rho_{A_1| A_i}\) for each \(i = 2, \cdots, n\).
\end{theorem}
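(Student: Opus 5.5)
The plan is to reduce the monogamy inequality to the CKW-type relation \eqref{eq:monogamyc2} for squared concurrence, using the functional relation of Theorem~\ref{th:functional_relation}.

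The key tool will be the auxiliary function
\[
k_{q,s}(y)=h_{q,s}(\sqrt{y}), \qquad 0\le y\le 1 .
\]
I will show that $k_{q,s}$ is monotonically increasing and convex on $[0,1]$, and that $k_{q,s}(0)=0$. Convexity together with $k_{q,s}(0)=0$ gives superadditivity on $[0,1]$:
\[
k_{q,s}(y_1+y_2)\ge k_{q,s}(y_1)+k_{q,s}(y_2) \quad \text{whenever } y_1+y_2\le 1,
\]
because $k_{q,s}(y)/y$ is nondecreasing. Induction then extends this to any finite sum with total at most $1$.

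Given these properties of $k_{q,s}$, the argument runs as follows.

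\begin{enumerate}
\item Each $\rho_{A_1|A_i}$ is a two-qubit state, and $\rho_{A_1|A_2\cdots A_n}$ is a qubit--qudit state. Theorem~\ref{th:functional_relation} therefore gives
\[
\hat C_{q,s}(\rho_{A_1|A_i})=k_{q,s}\big(C^2(\rho_{A_1|A_i})\big), \qquad \hat C_{q,s}(\rho_{A_1|A_2\cdots A_n})=k_{q,s}\big(C^2(\rho_{A_1|A_2\cdots A_n})\big).
\]
\item By \eqref{eq:monogamyc2} and the monotonicity of $k_{q,s}$,
\[
\hat C_{q,s}(\rho_{A_1|A_2\cdots A_n})\ge k_{q,s}\Big(\sum_{i}C^2(\rho_{A_1|A_i})\Big).
\]
Here $\sum_i C^2(\rho_{A_1|A_i})\le C^2(\rho_{A_1|A_2\cdots A_n})\le 1$, since the first party is a qubit.
\item Iterated superadditivity bounds the right-hand side below by $\sum_i k_{q,s}(C^2(\rho_{A_1|A_i}))=\sum_i\hat C_{q,s}(\rho_{A_1|A_i})$. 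This yields \eqref{eq:inequalitymonogamy}.
\end{enumerate}

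The main obstacle is proving that $k_{q,s}$ is convex (and increasing) in $y=x^2$ for $q\ge2$, $0\le s\le1$ and $1\le qs\le3$. Write $t=\sqrt{1-y}$ and
\[
\phi(t)=\Big(\big(\tfrac{1+t}{2}\big)^q+\big(\tfrac{1-t}{2}\big)^q\Big)^s .
\]
Up to the positive normalization, $k_{q,s}(y)\propto 1-\phi(\sqrt{1-y})$, so the task is to show that $\phi(\sqrt{1-y})$ is decreasing and concave in $y$.

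I would first expand $u(t)=\big(\tfrac{1+t}{2}\big)^q+\big(\tfrac{1-t}{2}\big)^q$ as an even function of $t$, that is, as a function of $t^2=1-y$. Monotonicity should then follow from $u$ increasing in $t^2$ together with $s\ge0$. For concavity I would compute $\tfrac{d^2}{dy^2}\,u^s$ and reduce it to an inequality between $u$, $u'$ and $u''$ (in the variable $t^2$), namely
\[
s\,u\,u''+s(s-1)(u')^2\ge 0 .
\]
I would check this at the endpoints $t\to0$ and $t\to1$, and then argue that it holds on the whole interval in the stated range, using $qs\le3$ to control the endpoint $t\to1$. This mirrors the analysis for unified-entropy entanglement in \cite{san2011unified}, on which I would rely for the sign analysis. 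The restriction $q\ge2$ is presumably what makes $u$ convex in $t^2$, and I would verify this first.
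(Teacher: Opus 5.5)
Your argument is correct, but it is organized differently from the paper's.

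\textbf{How the two routes differ.} The paper first proves the inequality for pure $|\psi\rangle_{A_1\cdots A_n}$. There $\hat C_{q,s}=h_{q,s}(C)$ is an identity across $A_1|A_2\cdots A_n$, and only the pure-state CKW inequality is needed. It then iterates the two-variable inequality \eqref{eq:hqsxy} cited from \cite{san2011unified}. Finally it lifts to mixed states through an optimal decomposition of $\rho_{A_1|A_2\cdots A_n}$ for $\hat C_{q,s}$, using convexity of the convex roof on each two-qubit marginal. You stay at the mixed-state level throughout and pass through the concurrence. You also replace \eqref{eq:hqsxy} by the stronger statement that $k_{q,s}(y)=h_{q,s}(\sqrt y)$ is increasing and convex with $k_{q,s}(0)=0$. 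This gives the $n$-term superadditivity at once and shows where $q\ge 2$ and $qs\le 3$ come from.

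\textbf{What your route costs.} You need two stronger inputs at the large cut.

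First, you need $C^2(\rho_{A_1|A_2\cdots A_n})\ge\sum_i C^2(\rho_{A_1A_i})$ with $C$ the convex-roof concurrence of a mixed $2\otimes 2^{n-1}$ state. The usual mixed-state form has the convex roof of $C^2$ on the left, which is weaker. Your version still follows from the pure-state case. Take an optimal decomposition $\{p_j,\psi_j\}$ for $C$ and set $\vec c_j=(C(\rho^{(j)}_{A_1A_i}))_i$. Then $C(\rho_{A_1|A_2\cdots A_n})=\sum_j p_jC(\psi_j)\ge\sum_j p_j\|\vec c_j\|_2\ge\|\sum_j p_j\vec c_j\|_2\ge\|(C(\rho_{A_1A_i}))_i\|_2$. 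This uses the triangle inequality, convexity of $C$, and monotonicity of the norm on the nonnegative orthant.

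Second, for that qubit--qudit mixed state you only need $\hat C_{q,s}\ge h_{q,s}(C)$. This follows from monotonicity and convexity of $h_{q,s}(x)=k_{q,s}(x^2)$. Avoid citing the equality of Theorem~\ref{th:functional_relation} there. Its reverse half needs an optimal decomposition in which all concurrences are equal. Wootters guarantees this for two qubits but not for $2\otimes d$ with $d>2$. The two-qubit marginals are exactly where you, like the paper, need it.

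\textbf{Two corrections to your sketch of the key lemma.}

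Concavity of $u^s$ in $y$ (equivalently in $\tau=t^2=1-y$) means $u\,u''+(s-1)(u')^2\le 0$, not $\ge 0$.

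Also, $q\ge 2$ does not make $u$ convex in $\tau$. Since $u(\tau)=2^{1-q}\sum_k\binom{q}{2k}\tau^k$ and $\binom{q}{2k}\le 0$ for $k\ge 2$ when $2\le q\le 3$, $u$ is concave there. Hence $u^s$ is immediately concave for $0<s\le 1$. For $1<q<2$ these coefficients are positive, and convexity of $k_{q,s}$ fails at $s=1$ near $\tau=0$; that is the role $q\ge 2$ plays in your lemma.

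For $q>3$, the condition at $\tau=0$ reads $(q-2)(q-3)\le 3(1-s)q(q-1)$, which holds when $qs\le 3$. At $\tau=1$ it reads exactly $qs\le 3$. The interior still has to be checked. If you do not want to do that, citing \eqref{eq:hqsxy} as the paper does closes your Step 3 equally well.
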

\begin{proof}
We begin by proving the theorem for an \(n\)-qubit pure state \(|\psi\rangle_{A_1 A_2 \cdots A_n}\). Note that inequality (\ref{eq:monogamyc2}) implies
\begin{equation}
\label{eq:monogamyc3}
  C(\rho_{A_1|A_2 \cdots A_n}) \geq \sqrt{C^2(\rho_{A_1|A_2}) + \cdots + C^2(\rho_{A_1| A_n})} .
\end{equation}
for any such pure state.

It has been shown \cite{san2011unified} that for \(q \geq 2\), \(0 \leq s \leq 1\) and \(1\leq qs \leq 3\),
\begin{equation}
\label{eq:hqsxy}
    h_{q,s}(\sqrt{x^2+y^2})\geq h_{q,s}(x)+h_{q,s}(y),
\end{equation}
on the domain $D = \{(x,y)|0 \leq x,y,x^2 + y^2 \leq 1\}$.

From (\ref{eq:monogamyc3}) and (\ref{eq:hqsxy}), it follows that
\begin{eqnarray}
&&\hat{C}_{q,s}\left( |\psi\rangle_{A_1|A_2 \cdots A_n} \right)\nonumber\\
&\overset{(a)}{=}& h_{q,s} \left( C(|\psi\rangle_{A_1|A_2 \cdots A_n}) \right)\nonumber \\
&\overset{(b)}{\geq}& h_{q,s} \left( \sqrt{C^2(\rho_{A_1|A_2}) + \cdots + C^2(\rho_{A_1| A_n})} \right)\nonumber \\
&\geq& h_{q,s} \left( C(\rho_{A_1|A_2}) \right) \nonumber\\
&&+ h_{q,s} \left( \sqrt{C^2(\rho_{A_1|A_3}) + \cdots + C^2(\rho_{A_1| A_n})} \right) \nonumber\\
&&\;\;\vdots \nonumber\\
&\geq& h_{q,s} \left(  C(\rho_{A_1|A_2}) \right) + \cdots + h_{q,s} \left(  C(\rho_{A_1|A_n}) \right) \nonumber\\
&\overset{(c)}{=}& \hat{C}_{q,s}( \rho_{A_1| A_2} ) + \cdots + \hat{C}_{q,s}( \rho_{A_1| A_n} ),
\end{eqnarray}
where (a) follows from the functional relation between the concurrence and  normalized unified $(q,s)$-concurrence for pure states in a \(\mathbb{C}^2 \otimes \mathbb{C}^d\) system; (b) follows from the monotonicity of \(h_{q,s}(x)\), while the subsequent inequalities are obtained
through iterative application of (\ref{eq:hqsxy}); and the final equality (c) is established by Theorem \ref{th:functional_relation}.

Now consider an \(n\)-qubit mixed state \(\rho_{A_1 A_2 \cdots A_n}\). Let \(\rho_{A_1 |A_2 \cdots A_n} = \sum_j p_j |\psi_j\rangle_{A_1 |A_2 \cdots A_n} \langle \psi_j|\) be an optimal decomposition achieving
\begin{equation}
  \hat{C}_{q,s} \left( \rho_{A_1|A_2 \cdots A_n} \right) = \sum_j p_j  \hat{C}_{q,s} \left( |\psi_j\rangle_{A_1| A_2 \cdots A_n} \right).
\end{equation}

Since each \(|\psi_j\rangle\) is an \(n\)-qubit pure state, we have that
\begin{eqnarray}
 &&\hat{C}_{q,s} \left( \rho_{A_1 |A_2 \cdots A_n} \right)\nonumber\\
 &=& \sum_j p_j \hat{C}_{q,s} \left( |\psi_j\rangle_{A_1 |A_2 \cdots A_n} \right) \nonumber\\
&\geq& \sum_j p_j \left[ \hat{C}_{q,s}( \rho_j^{A_1| A_2} ) + \cdots + \hat{C}_{q,s}( \rho_j^{A_1| A_n} ) \right] \nonumber\\
&=& \sum_j p_j \hat{C}_{q,s}( \rho_j^{A_1 A_2} ) + \cdots + \sum_j p_j \hat{C}_{q,s}( \rho_j^{A_1 A_n} )\nonumber \\
&\geq& \hat{C}_{q,s}( \rho_{A_1| A_2} ) + \cdots + \hat{C}_{q,s}( \rho_{A_1| A_n} ),
\end{eqnarray}
where the last inequality follows from the definition of the normalized unified $(q,s)$-concurrence for each reduced state \(\rho_{A_1 A_i}\).
\end{proof}

Next, we provide an example to demonstrate that the normalized unified $(q,s)$-concurrence satisfies to the monogamy property.
\begin{example}
Consider the three-qubit state \(|\phi\rangle_{ABC}\) given in the generalized Schmidt decomposition form \cite{acin2000generalized,gao2008estimation}:
\begin{equation}
    |\phi\rangle_{ABC} = \lambda_0 |000\rangle + \lambda_1 e^{i\varphi} |100\rangle + \lambda_2 |101\rangle + \lambda_3 |110\rangle + \lambda_4 |111\rangle,
\end{equation}
with \(\lambda_i \geq 0\) for \(i=0,1,\cdots, 4\), and normalization \(\sum_{i=0}^4 \lambda_i^2 = 1\). The concurrences in this system are
\begin{eqnarray}
    C(\ket{\phi}_{A|BC}) &=&  \sqrt{4\lambda_0^2(1-\lambda_0^2-\lambda_1^2)},\\
    C(\ket{\phi}_{A|B}) &=& 2\lambda_0 \lambda_2,\\
    C(\ket{\phi}_{A|C}) &=& 2\lambda_0 \lambda_3.
\end{eqnarray}

For \(q \geq 2\), \(0 \leq s \leq 1\) and \(1\leq qs \leq 3\), we have that
    \begin{eqnarray}
\label{eq:functionalnABC}
K&=&\hat{C}_{q,s}(|\phi\rangle_{A|BC})\nonumber \\
&=& \frac{2^{qs} -\left[\Big(1 + \sqrt{1 - 4\lambda_0^2(1-\lambda_0^2-\lambda_1^2)}\Big)^q
    + \Big(1 - \sqrt{1 - 4\lambda_0^2(1-\lambda_0^2-\lambda_1^2)}\Big)^q\right]^s}{2^{qs}(1 - 2^{s(1-q)})},\\
\label{eq:functionalnAB}
K_1&=&\hat{C}_{q,s}(|\phi\rangle_{A|B}) \nonumber \\
&=& \frac{2^{qs} -\left[\Big(1 + \sqrt{1 - 4\lambda_0^2\lambda_2^2}\Big)^q
    + \Big(1 - \sqrt{1 - 4\lambda_0^2\lambda_2^2}\Big)^q\right]^s}{2^{qs}(1 - 2^{s(1-q)})},\\
\label{eq:functionalnAC}
K_1&=&\hat{C}_{q,s}(|\phi\rangle_{A|C})\nonumber\\
&=& \frac{2^{qs} -\left[\Big(1 + \sqrt{1 - 4\lambda_0^2\lambda_3^2}\Big)^q
    + \Big(1 - \sqrt{1 - 4\lambda_0^2\lambda_3^2}\Big)^q\right]^s}{2^{qs}(1 - 2^{s(1-q)})}.
\end{eqnarray}

In order to  verify the inequality (\ref{eq:inequalitymonogamy}), we set the values \(\lambda_0 =\sqrt{2/7}, \lambda_1 = \lambda_2 = \sqrt{1/7}\) and \(\lambda_3 = \sqrt{3/7}\). Figure \ref{fig:monogamy} shows the ¡°residual entanglement¡±
\begin{eqnarray}
    \tau_{\hat{C}_{q,s}}(|\phi\rangle_{ABC})=K-K_1 -K_2.
    \label{eq:rentanglement1}
\end{eqnarray}
\begin{figure}[ht]
\centering
    \includegraphics[width=1.25\linewidth]{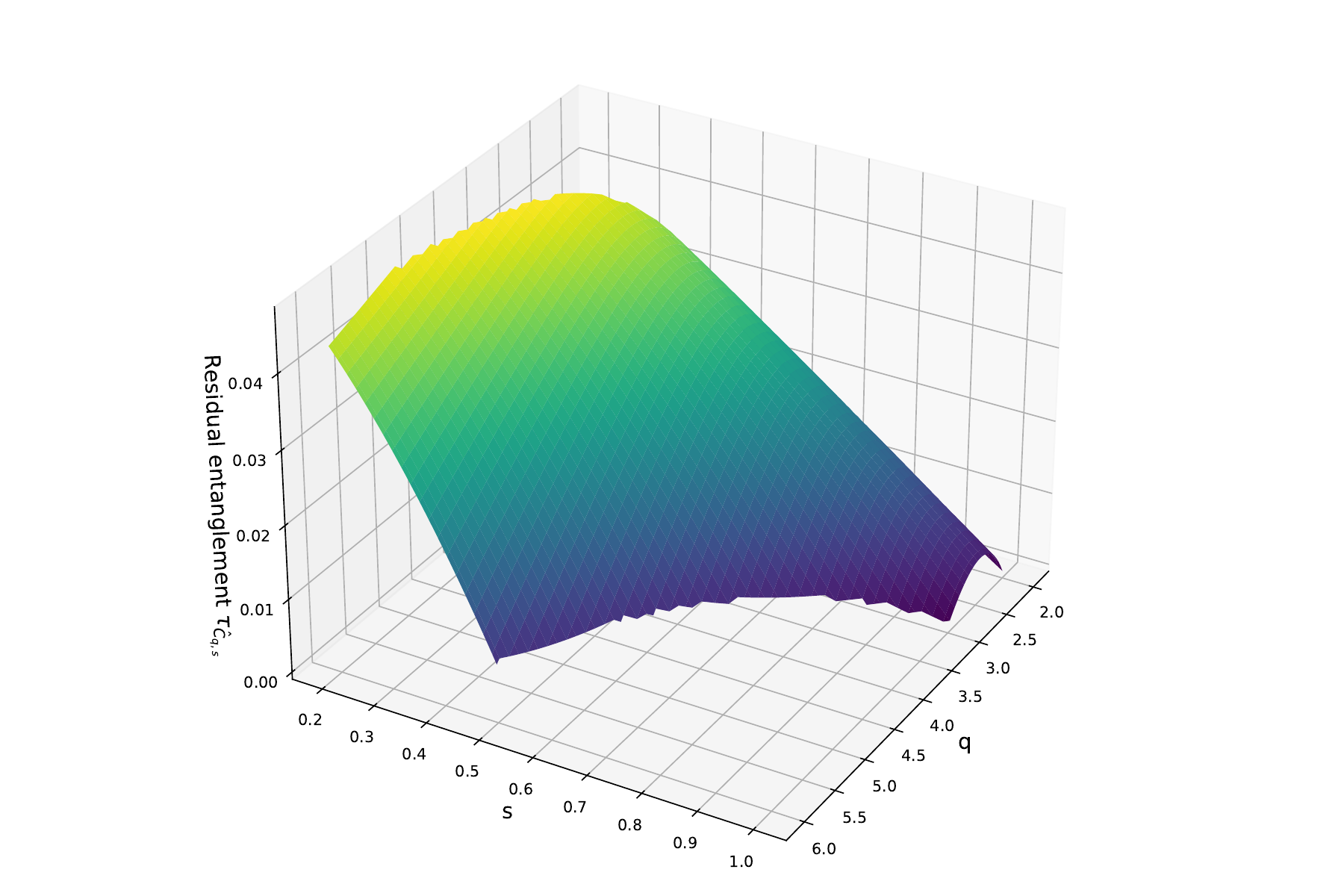}
     \caption{The residual entanglement $\tau_{\hat{C}_{q,s}}=K-K_1 -K_2$ of the normalized unified $(q,s)$-concurrence in Eq.(\ref{eq:rentanglement1}).}
    \label{fig:monogamy}
\end{figure}
\begin{figure*}[ht]
\centering
    \includegraphics[width=0.65\linewidth]{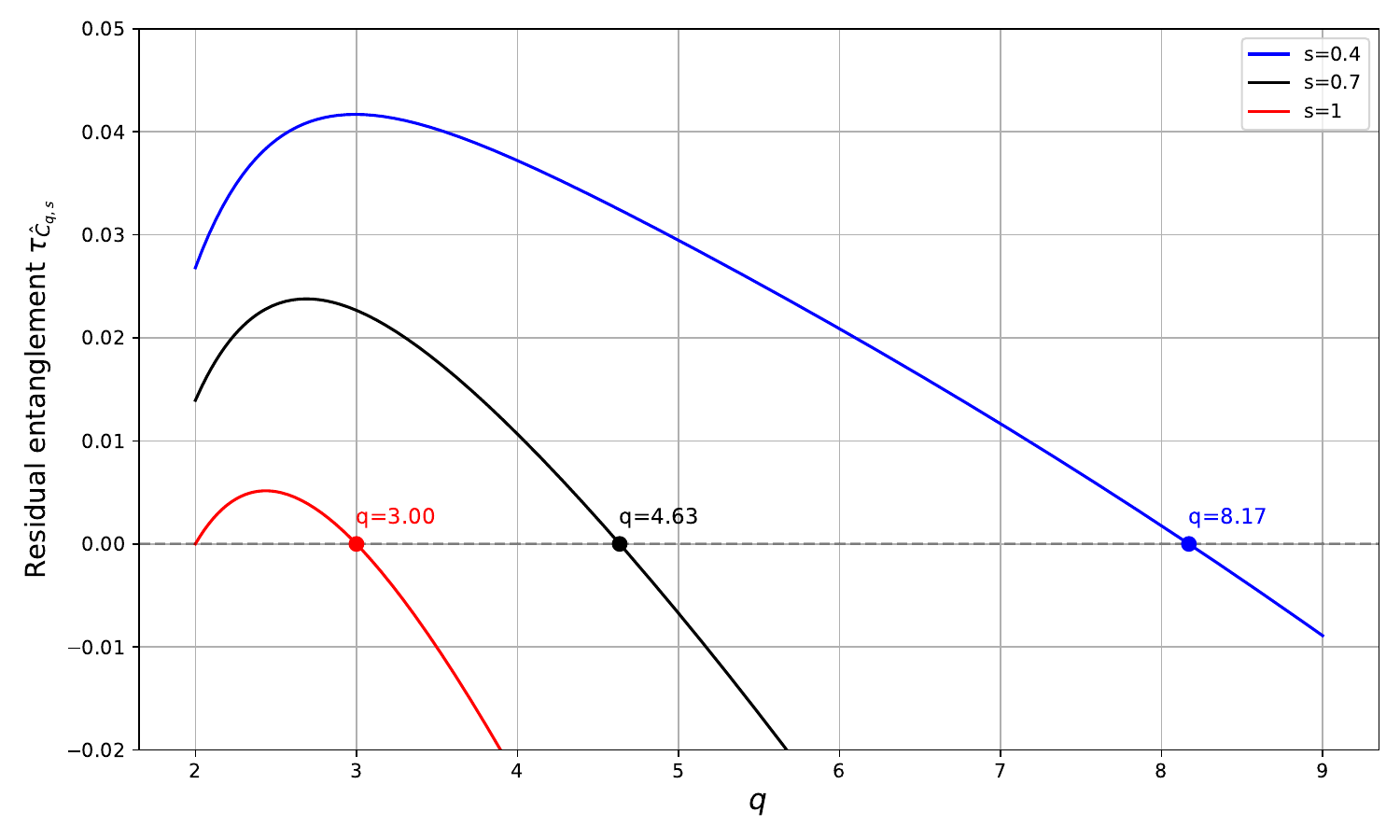}
     \caption{The residual entanglement $\tau_{\hat{C}_{q,s}}$ of the normalized unified $(q,s)$-concurrence, where $s=1$, $s=0.7$ and $s=0.4$.}
    \label{fig:monogamyS3}
\end{figure*}

As shown in Figure \ref{fig:monogamyS3}, when $s$ decreases, the $q$ satisfying the monogamy for this state will be larger. When $s=1$, the residual entanglement $\tau_{\hat{C}_{q,1}}$ is the same as that given by the \(q\)-concurrence, i.e., $\tau_{\hat{C}_{q,1}}(|\phi\rangle_{ABC})=\tau_{\hat{C}_{q}}(|\phi\rangle_{ABC})$. In addition, the monogamy of the
\(q\)-concurrence does not work for this state when $q\geq 3$. If $s=0.4$, then
the monogamy of the  unified $(q,s)$-concurrence still works for state when $2\leq q \leq 8.17$.
\end{example}
%%%%%%%%%%%%%%%%%%%%%%%%%%%%%%%%%%%%%%%%%%%%%%%%%%%%%%%%%%%%%%%%%%%%%%
\subsection{Entanglement polygon inequalities for the unified $(q,s)$-concurrence}

In this section, we discuss the entanglement polygon inequalities for the unified $(q,s)$-concurrence. Unlike monogamy inequalities, entanglement polygon inequalities provide another class of mathematical relations that capture the intricate correlations among particles in multipartite systems \cite{qian2018entanglement}.

An $n$-qudit pure state defined on the Hilbert space $\otimes_{i=1}^n \mathcal{H}_i$ can be generally expressed as
\begin{equation}
  |\psi\rangle_{1 \cdots n} = \sum_{s_1=0}^{d_1-1} \cdots \sum_{s_n=0}^{d_n-1} \alpha_{s_1 \cdots s_n} |s_1 \cdots s_n\rangle,
  \label{eq:psi1n}
\end{equation}
where $\dim(\mathcal{H}_i) = d_i$ with $d_i \geq 2$ for $i = 1, \cdots, n$, and the complex coefficients $\alpha_{s_1 \cdots s_n}$ satisfy the normalization condition $\sum_{s_1, \cdots, s_n} |\alpha_{s_1 \cdots s_n}|^2 = 1$.

Firstly, we consider a one-to-group entanglement measure $\mathcal{E}^{j|\overline{j}}$ for the pure state given in Eq.(\ref{eq:psi1n}), where the index $j$ denotes a single subsystem, and $\overline{j}$ represents all the remaining subsystems. Here, $\mathcal{E}$ is an arbitrary bipartite entanglement measure. Such bipartite entanglements are also referred to as quantum marginal entanglements~\cite{qian2018entanglement}. Based on the unified $(q,s)$-concurrence, we derive the following relationship among all one-to-group marginal entanglement measures for an arbitrary $n$-qudit system.

\begin{theorem}
\label{eq:polygon1}
For any $n$-qudit pure state $|\psi\rangle$ on the Hilbert space $\otimes_{i=1}^n \mathcal{H}_i$, the following inequality holds,

\begin{equation}
  C^{{j|\overline{j}}}_{q,s}(|\psi\rangle) \leq \sum_{\substack{ k\neq j,\ \forall k
  }}C_{q,s}^{k|\overline{k}}(|\psi\rangle)
\end{equation}
 for any $q\geq 1$ and $qs\geq 1$.
\end{theorem}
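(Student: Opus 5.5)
The plan is to reduce the polygon inequality to the subadditivity part of Lemma~\ref{le:concave}, applied iteratively. For the pure state $|\psi\rangle$, the one-to-group unified $(q,s)$-concurrence is a function of the single-site marginal only:
\[
C^{j|\overline{j}}_{q,s}(|\psi\rangle)=F_{q,s}(\rho_j)=1-(\tr\rho_j^q)^s,
\]
with $\rho_j$ the reduced state of subsystem $j$. This holds because the Schmidt decomposition across the cut $j|\overline{j}$ gives $\rho_j$ and $\rho_{\overline{j}}$ the same nonzero spectrum. In the regime $q\geq1$, $qs\geq1$ we have $\epsilon_{q,s}=1$. The claim is therefore equivalent to the purely spectral inequality
\[
F_{q,s}(\rho_j)\leq \sum_{k\neq j}F_{q,s}(\rho_k)
\]
for the one-body marginals of a global pure state.

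\textbf{Step 1: swap the marginal for its complement.} Write $F_{q,s}(\rho_j)=F_{q,s}(\rho_{\overline{j}})$, where $\rho_{\overline{j}}$ is the reduced state on the other $n-1$ parties. Purity of $|\psi\rangle$ makes the two spectra coincide up to zeros, and zeros do not change $\tr\rho^q$ because $q\geq1>0$.

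\textbf{Step 2: split off one party at a time.} Fix an ordering $k_1,\dots,k_{n-1}$ of the indices $k\neq j$. Apply the right-hand inequality of Eq.~(\ref{eq:Subadditivity}) to the bipartite state $\rho_{\overline{j}}$, viewed as a state on $k_1$ versus the remaining $n-2$ parties. This gives
\[
F_{q,s}(\rho_{\overline{j}})\leq F_{q,s}(\rho_{k_1})+F_{q,s}(\rho_{k_2\cdots k_{n-1}}).
\]
Repeating on the last term, peeling off one subsystem each time, yields after $n-2$ steps
\[
F_{q,s}(\rho_{\overline{j}})\leq\sum_{i=1}^{n-1}F_{q,s}(\rho_{k_i}).
\]
This is a straightforward induction on the number of parties in the group. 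Lemma~\ref{le:concave}(2) is stated for a general bipartite state with $q\geq1$, $qs\geq1$ and arbitrary dimensions, so it applies to each grouping. Combining with Step~1 and with $F_{q,s}(\rho_k)=C^{k|\overline{k}}_{q,s}(|\psi\rangle)$ gives the theorem.

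\textbf{Main obstacle.} The real content is the subadditivity $F_{q,s}(\rho_{AB})\leq F_{q,s}(\rho_A)+F_{q,s}(\rho_B)$, which I take from Lemma~\ref{le:concave} as stated. If I had to justify it, I would set $x=\tr\rho_A^q$, $y=\tr\rho_B^q$ and $z=\tr\rho_{AB}^q$. I would first prove $z\geq x+y-1$ for $q\geq1$, which is the known Tsallis-type subadditivity, together with $z\geq \max$-type bounds. The target is then $1-z^s\leq 2-x^s-y^s$, equivalently $x^s+y^s\leq 1+z^s$. Here I expect to use that $t\mapsto t^s$ is concave or convex depending on whether $s\le1$ or $s\ge1$, handling the two ranges separately and exploiting $x,y,z\in(0,1]$ with $z\le\min(x,y)$. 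This case analysis is where care is needed. For the polygon theorem itself, however, the lemma can be used as given, and the only subtlety is noting that Step~1 requires purity of the global state.
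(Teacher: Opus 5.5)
Your argument is the paper's proof. You identify $C^{j|\overline{j}}_{q,s}(|\psi\rangle)$ with $F_{q,s}(\rho_j)=F_{q,s}(\rho_{\overline{j}})$ by purity, then split off one party at a time using the subadditivity of Lemma~\ref{le:concave}(2), exactly as in Eqs.~(\ref{eq:ploygon2})--(\ref{eq:ploygon3}), so the proof is correct as it stands.

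Your optional sketch of the lemma should not be relied on, though. The claim $z=\tr\rho_{AB}^q\le\min(x,y)$ fails for entangled $\rho_{AB}$: a pure entangled state has $z=1>x$. Also, for $s<1$ the bound $z\ge x+y-1$ alone does not give $x^s+y^s\le 1+z^s$; take $x=y=0.9$, $z=0.8$, $s=1/2$. The standard route is Audenaert's inequality $\|\rho_A\|_q+\|\rho_B\|_q\le 1+\|\rho_{AB}\|_q$ followed by convexity of $t\mapsto t^{qs}$, which is precisely where $qs\ge 1$ enters.
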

\begin{proof}
    For any $q\geq 1$ and $qs\geq 1$, i.e., $\epsilon_{q,s}=1$, it means that
    \begin{equation}
        C_{q,s}(\ket{\psi}_{AB})=1-(\tr(\rho_A^q))^s=F_{q,s}(\rho_A).
        \label{eq:cqsfqs}
    \end{equation}
 Therefore, we have that
\begin{eqnarray}
    C_{q,s}^{j|\overline{j}}(|\psi\rangle)&=&F_{q,s}(\rho_j)\nonumber\\
    &=&F_{q,s}(\rho_{\overline{j}})\leq \sum_{\substack{ k\neq j,\ \forall k}}F_{q,s}(\rho_{k})\label{eq:ploygon2}\\
    &=&\sum_{\substack{ k\neq j,\ \forall k}}C_{q,s}^{k|\overline{k}}(|\psi\rangle)\label{eq:ploygon3},
\end{eqnarray}
where the inequality (\ref{eq:ploygon2}) is followed by iteratively using the
subadditivity condition in Eq.(\ref{eq:Subadditivity}), and $F_{q,s}(\rho_j)$ is the entropy of the reduced density matrix $\rho_j$ of the subsystem $j$;
the inequality (\ref{eq:ploygon3}) is obtained from Eq.(\ref{eq:cqsfqs}).
\end{proof}

Next, we consider an $(m + n)$-partite system described by the state $|\psi\rangle$ on the Hilbert space $\otimes_{i=1}^m \mathcal{H}_{A_i} \otimes_{j=1}^n \mathcal{H}_{B_j}$. Let $\textbf{A} = A_1 \cdots A_m$ and $\textbf{B} = B_1\cdots B_n$, where each $A_i$ and $B_j$ represents a qudit subsystem. For the state $|\psi\rangle^{\textbf{AB}} = |\psi\rangle^{A_1 \cdots A_m |B_1 \cdots B_n}$, We establish the following polygon inequality for bipartite high-dimensional entanglement.

\begin{theorem}
For any entangled pure state $|\psi\rangle_{\textbf{AB}}$, $q\geq 1$ and $qs\geq 1$, the unified $(q,s)$-concurrence satisfies the inequality
\begin{equation}
    C_{q,s}^{\textbf{A}|\textbf{B}}(|\psi\rangle) \leq \sum_{i=1}^m C_{q,s}^{A_i|\overline{A}_i}(|\psi\rangle),
\end{equation}
where $C_{q,s}^{\textbf{A}|\textbf{B}}$ denotes the the unified $(q,s)$-concurrence with respect to the bipartition $\textbf{A}|\textbf{B}$, and $C_{q,s}^{A_i|\overline{A}_i}$ represents the marginal entanglement between subsystem $A_i$ and the remaining subsystems $\overline{A}_i$.
\end{theorem}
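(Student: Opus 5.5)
The plan is to mirror the proof of Theorem~\ref{eq:polygon1}, using subadditivity from Lemma~\ref{le:concave} applied iteratively across the subsystems $A_1,\dots,A_m$. Since $q\geq 1$ and $qs\geq 1$, we have $\epsilon_{q,s}=1$. For the pure state $|\psi\rangle_{\textbf{AB}}$, Eq.~(\ref{eq:cqsfqs}) therefore gives
\begin{equation*}
C_{q,s}^{\textbf{A}|\textbf{B}}(|\psi\rangle)=F_{q,s}(\rho_{\textbf{A}}), \qquad \rho_{\textbf{A}}=\tr_{\textbf{B}}|\psi\rangle\langle\psi| .
\end{equation*}
Likewise, for each $i$, the marginal entanglement is $C_{q,s}^{A_i|\overline{A}_i}(|\psi\rangle)=F_{q,s}(\rho_{A_i})$. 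This holds because $|\psi\rangle$ is pure with respect to the bipartition $A_i|\overline{A}_i$, and by the Schmidt decomposition $\rho_{A_i}$ and $\rho_{\overline{A}_i}$ share the same nonzero spectrum. This reduces the theorem to the purely entropic inequality
\begin{equation*}
F_{q,s}(\rho_{A_1\cdots A_m})\leq \sum_{i=1}^m F_{q,s}(\rho_{A_i}).
\end{equation*}

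I would prove this inequality by induction on $m$. The case $m=1$ is trivial. For the inductive step, treat $A_1\cdots A_{m-1}$ as one system and $A_m$ as the other, and apply the right-hand inequality of Eq.~(\ref{eq:Subadditivity}) to the bipartite state $\rho_{A_1\cdots A_m}$. This gives
\begin{equation*}
F_{q,s}(\rho_{A_1\cdots A_m})\leq F_{q,s}(\rho_{A_1\cdots A_{m-1}})+F_{q,s}(\rho_{A_m}),
\end{equation*}
and the induction hypothesis then bounds the first term. The only point needing care is that Lemma~\ref{le:concave}(2) is stated for an arbitrary bipartite (possibly mixed) state $\rho_{AB}$. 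It therefore applies to the mixed marginal $\rho_{A_1\cdots A_m}$ with arbitrary finite local dimensions, since grouping qudits just produces a larger finite-dimensional factor. Substituting back the identifications above completes the proof.

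Given Lemma~\ref{le:concave}, I expect no real obstacle: all the difficulty is hidden in the subadditivity of unified $(q,s)$-entropy for $q\geq1$, $qs\geq1$, which we take from the appendix. The only step to state explicitly is that the purity of $|\psi\rangle$ lets us use the $\textbf{A}$-side marginal, whose spectrum equals that of $\rho_{\textbf{B}}$, so the choice of side in the definition of $C_{q,s}^{\textbf{A}|\textbf{B}}$ is immaterial. The hypothesis that $|\psi\rangle$ is entangled plays no role; the inequality holds for every pure state.
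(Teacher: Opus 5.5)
Your proposal is correct and follows the paper's proof. You identify $C_{q,s}^{\textbf{A}|\textbf{B}}(|\psi\rangle)=F_{q,s}(\rho_{\textbf{A}})$ and $C_{q,s}^{A_i|\overline{A}_i}(|\psi\rangle)=F_{q,s}(\rho_{A_i})$, then apply the subadditivity of Lemma~\ref{le:concave}(2) iteratively, which you make explicit as an induction on $m$ (your remark that the entanglement hypothesis is never used is also accurate).
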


\begin{proof}
    For any $q\geq 1$ and $qs\geq 1$, i.e., $\epsilon_{q,s}=1$, it means that
    \begin{equation}
        C_{q,s}(\ket{\psi})=1-(\tr(\rho_A^q))^s=F_{q,s}(\rho_A).
        \label{eq:cqsfqs1}
    \end{equation}
Furthermore, we obtain that
\begin{eqnarray}
    C_{q,s}^{\textbf{A}|\textbf{B}}(|\psi\rangle)&=&F_{q,s}(\rho_\textbf{A})\nonumber\\
    &\leq& \sum_{i=1}^mF_{q,s}(\rho_{A_i})\label{eq:ploygon23}\\
    &=&\sum_{i=1}^mC_{q,s}^{A_i|\overline{A}_i}(|\psi\rangle).\label{eq:ploygon24}
\end{eqnarray}
Similar to the inequalities (\ref{eq:ploygon2}) and (\ref{eq:ploygon3}),  the inequalities (\ref{eq:ploygon23}) and (\ref{eq:ploygon24}) hold for any $m+n$-qudit pure state $\ket{\psi}_{\textbf{AB}}$ on the Hilbert space $\otimes_{i=1}^m \mathcal{H}_{A_i} \otimes_{j=1}^n \mathcal{H}_{B_j}$.
\end{proof}
%%%%%%%%%%%%%%%%%%%%%%%%%%%%%%%%%%%%%%%%%%%%%%%%%%%%%%%%%%
\section{Conclusion}
\label{sec:Conclusion}
In summary, we have proposed and investigated a comprehensive two-parameter  framework for quantifying bipartite quantum entanglement: the unified
$(q,s)$-concurrence. This measure generalizes and unifies several earlier proposals, reducing to the standard concurrence for particular parameter choices and encompassing various parameterized concurrences as special cases.
Furthermore,
we demonstrated that it constitutes a well-defined entanglement measure. Analytical lower bounds for this concurrence have been derived for general mixed states based on the PPT and realignment criteria. Explicit expressions have been obtained for the unified $(q,s)$-concurrence for isotropic states and Werner states. In addition, we have investigated the monogamy properties of the unified $(q,s)$-concurrence for \(q \geq 2\), \(0 \leq s \leq 1\), and \(1\leq qs \leq 3\), in qubit systems. At last, we
derived an entanglement polygon inequality for the unified $(q,s)$-concurrence with $q\geq 1$ and $qs\geq 1$, which manifests the relationship among all marginal entanglements in any multipartite qudit system.
Our work introduces a versatile and unifying framework for entanglement quantification. The unified
$(q,s)$-concurrence, with its adjustable parameters, offers a powerful tool for probing entanglement structures across different quantum states and systems.
%%%%%%%%%%%%%%%%%%%%%%%%%%%%%%%%%%%%%%%%%%%%%%%%%%%%%%%%%%
\section*{Acknowledgments}
This research was supported by the National Natural Science Foundation of China under Grant No.12301590 and Science Research Project of Hebei Education Department under Grant No.BJ2025061.
\appendix
\section{ Proof of the Lemma \ref{le:concave}}
\label{sec:Appendix}
\textit{Proof of the Lemma \ref{le:concave}:}

(1) \textbf{Concavity.}
    For $\sum_{i=1}^np_iF_{q,s}(\rho_i)\leq F_{q,s}(\rho)$, we first prove $\lambda F_{q,s}(\rho)+\mu F_{q,s}(\sigma)\leq F_{q,s}(\lambda\rho+\mu\sigma)$ with $\lambda,\mu\geq 0$ and $\lambda+\mu=1$.
    Here, from Minkowski's inequality \cite{1985The} with positive semidefinite matrices $\rho$ and $\sigma$, we get
\begin{equation}
\label{eq:Minkowski}
[\tr(\rho + \sigma)^q]^{1/q} \leq [\tr(\rho)^q]^{1/q} + [\tr(\sigma)^q]^{1/q}
\end{equation}
for \(q \geq 1\).

Case 1: when \(q \geq 1\), from Eq.(\ref{eq:Minkowski}), we have that
\begin{equation}
\label{eq:Minkowski1}
[\tr(\lambda \rho + \mu \sigma)^q]^{1/q} \leq \lambda [\tr(\rho)^q]^{1/q} + \mu [\tr(\sigma)^q]^{1/q},
\end{equation}
where \(\lambda \geq 0, \mu \geq 0, \lambda + \mu = 1\). Due to \(qs\geq  1\), we have
\begin{eqnarray}
\label{eq:Minkowski2}
[\tr(\lambda \rho + \mu \sigma)^q]^s &\leq& [\lambda (\tr(\rho)^q)^{1/q} + \mu (\tr(\sigma)^q)^{1/q}]^{qs}\nonumber\\
&\leq& \lambda [\tr(\rho)^q]^s + \mu [\tr(\sigma)^q]^s,
\end{eqnarray}
where Eq.(\ref{eq:Minkowski2}) is obtained from the convexity of the function \(y = x^{qs}\) for \( qs \geq 1\).  The inequality Eq.(\ref{eq:Minkowski2}) implies that
\begin{equation}
\label{eq:convave1}
1-[\tr(\lambda \rho + \mu \sigma)^q]^s \geq
 \lambda(1- [\tr(\rho)^q]^s) + \mu(1- [\tr(\sigma)^q]^s).
\end{equation}
Therefore, we have that
$$\lambda F_{q,s}(\rho)+\mu F_{q,s}(\sigma)\leq F_{q,s}(\lambda\rho+\mu\sigma).$$
By induction on $i$, we obtain the following inequality:
\begin{equation*}
    \sum_{i=1}^np_iF_{q,s}(\rho_i)\leq F_{q,s}(\rho),
\end{equation*}
where $\{p_i\}$ is the probability distribution corresponding to density operators $\rho_i$ of $\rho$. The equality holds iff all the states $\rho_i$ are identical.

Hence, $F_{q,s}(\rho)=1-(\tr\rho^q)^s$ is a concave function for $q\geq 1$ and $qs\geq1$.

Case 2: when \(0 < q < 1\), the inequality in Eq.(\ref{eq:Minkowski}) is reversed,
\begin{equation}
\label{eq:Minkowski4}
[\tr(\rho + \sigma)^q]^{1/q} \geq  [\tr(\rho)^q]^{1/q} +  [\tr(\sigma)^q]^{1/q},
\end{equation}

Since \(0 < qs < 1\), we have
\begin{eqnarray}
\label{eq:Minkowski5}
[\tr(\lambda \rho + \mu \sigma)^q]^s &\geq& [\lambda (\tr(\rho)^q)^{1/q} + \mu (\tr(\sigma)^q)^{1/q}]^{qs}\nonumber\\
&\geq& \lambda [\tr(\rho)^q]^s + \mu [\tr(\sigma)^q]^s,
\end{eqnarray}
where \(\lambda \geq 0, \mu \geq 0, \lambda + \mu = 1\).
The inequality Eq.(\ref{eq:Minkowski5}) implies that
\begin{equation}
\label{eq:convave2}
[\tr(\lambda \rho + \mu \sigma)^q]^s -1\geq
 \lambda( [\tr(\rho)^q]^s-1) + \mu([\tr(\sigma)^q]^s-1).
\end{equation}
Therefore, we have that
$$\lambda F_{q,s}(\rho)+\mu F_{q,s}(\sigma)\leq F_{q,s}(\lambda\rho+\mu\sigma).$$
By induction on $i$, we obtain the following inequality:
\begin{equation*}
    \sum_{i=1}^np_iF_{q,s}(\rho_i)\leq F_{q,s}(\rho),
\end{equation*}
where $\{p_i\}$ is the probability distribution corresponding to density operators $\rho_i$ of $\rho$. The equality holds iff all the states $\rho_i$ are identical.

So  $F_{q,s}(\rho)=(\tr\rho^q)^s-1$ is a concave function for \(0 < q < 1, 0 < qs < 1\).

\textbf{(2) Subadditivity.} Since $q\geq 1$ and $qs\geq1$, then the $E^{(s)}_q(\rho_{AB})=\frac{1}{(q-1)s}[1-(\tr(\rho^q))^s]$ in Ref.\cite{rastegin2011some} satisfies that
\begin{equation}
    E^{(s)}_q(\rho_{AB})\leq E^{(s)}_q(\rho_{A})+E^{(s)}_q(\rho_{B}).
    \label{eq:Esq}
\end{equation}
From Eq.(\ref{eq:Esq}), we have that
\begin{equation}
    1-[\tr(\rho_{AB}^q)]^s\leq1-[\tr(\rho_{A}^q)]^s+1-[\tr(\rho_{B}^q)]^s.
\end{equation}
It means that
\begin{equation}
    F_{q,s}(\rho_{AB})\leq F_{q,s}(\rho_{A})+F_{q,s}(\rho_{B}).
    \label{eq:Fqs}
\end{equation}

Next, we prove that $F_{q,s}(\rho_{AB})\geq |F_{q,s}(\rho_{A})-F_{q,s}(\rho_{B})|$. Given a bipartite pure state $\ket{\psi}_{ABC}$, from the Schmidt decomposition of $\ket{\psi}_{ABC}$, the density matrices $\rho_{AB}$ and $\rho_C$ have the same nonzero eigenvalues. Hence $F_{q,s}(\rho_{AB})= F_{q,s}(\rho_{C})$. Similarly, we have $F_{q,s}(\rho_{A})= F_{q,s}(\rho_{BC})$. Combining these  with the inequality (\ref{eq:Fqs}), we obtain
\begin{equation}
    F_{q,s}(\rho_{A})-F_{q,s}(\rho_{B})\leq F_{q,s}(\rho_{AB}).
    \label{eq:Fqs1}
\end{equation}
By symmetry, we also have that
\begin{equation}
    F_{q,s}(\rho_{B})-F_{q,s}(\rho_{A})\leq F_{q,s}(\rho_{AB}).
    \label{eq:Fqs2}
\end{equation}

Combining Eqs.(\ref{eq:Fqs1}) and (\ref{eq:Fqs2}), we can have the claim that
\begin{equation}
    |F_{q,s}(\rho_{A})-F_{q,s}(\rho_{B})|\leq F_{q,s}(\rho_{AB}).
    \label{eq:Fqs3}
\end{equation}
%%%%%%%%%%%%%%%%%%%%%%%%%%%%%%%%%%%%%%%%%%%%%%%%%%%%
\section{ Analytical Expression of $\xi(\rho_F)$ for
Isotropic States} \label{sec:Appendix A}
The proof of Lemma \ref{lem:isotropic states} draws inspiration from recent advances in the application of local symmetry~\cite{vollbrecht2001entanglement,wang2016entanglement,lee2003convex}. For the symmetric state $\rho_F$, the unified $(q,s)$-concurrence is given by

\begin{equation}
C_{q,s}(\rho_F) = \mathrm{co}[\xi(F, q,s, d)],
\label{eq:Aqs_concurrence}
\end{equation}
where the function $\xi(F, q,s, d)$ is defined as
\begin{equation}
\xi(F,q,s,d) = \inf \Bigl\{ C_{q,s}(|\psi\rangle) \Bigm| f_{\Psi^+}(|\psi\rangle) = F,\ \operatorname{rank}(\rho_\Psi) \leqslant d \Bigr\},
\label{eq:Axi_definition}
\end{equation}
and $\operatorname{rank}(\rho_\Psi)$ denotes the rank of the density operator $\rho_\Psi = |\psi\rangle\langle\psi|$.

For a pure state $|\psi\rangle = \sum_{i=1}^d \sqrt{\lambda_i} |a_i b_i\rangle$ with Schmidt coefficients $\{\lambda_i\}$, the unified $(q,s)$-concurrence is given by
\begin{equation}
C_{q,s}(|\psi\rangle) =1 -(\tr(\rho_A^q))^s =1 - (\sum_{i=1}^d \lambda_i^q)^s,
\label{eq:Aqs_concurrence_pure}
\end{equation}
where $q\geq 1\ \text{and } qs\geq 1$.

To evaluate $f_{\Psi^+}(|\psi\rangle)$, we express $|\psi\rangle$ in its Schmidt decomposition as
$$|\psi\rangle = \sum_{i=1}^d \sqrt{\lambda_i} |a_i b_i\rangle = (U_A \otimes U_B) \sum_{i=1}^d \sqrt{\lambda_i} |i i\rangle.$$
A straightforward calculation yields~\cite{2000Entanglement}
$$f_{\Psi^+}(|\psi\rangle) = \frac{1}{d} \left| \sum_{i=1}^d \sqrt{\lambda_i} v_{ii} \right|^2,$$
where $V = U_A^\mathrm{T} U_B$ and $v_{jk} = \langle j| V |k\rangle$.

It is straightforward to obtain $\xi(F,q,s,d)$ for $F \in (0, 1/d]$ by choosing $\lambda_1 = 1$ and $v_{11} =\sqrt{F}$, which yields $\xi(F,q,s, d) = 0$. In the entangled regime $F\geq 1/d$ (equivalently $Fd \geq 1$), we seek the minimum of Eq.~(\ref{eq:Aqs_concurrence_pure}) subject to the constraints
\begin{eqnarray}
  \sum_i \lambda_i &=& 1, \label{eq:constraint1} \\
\sum_i \sqrt{\lambda_i} &=& \sqrt{Fd}, \label{eq:constraint2}
\end{eqnarray}
This constrained optimization problem is addressed via the method of Lagrange multipliers \cite{rungta2003concurrence}.
The condition for an extremum is then
given by
\begin{equation}
(\sqrt{\lambda_i})^{2q - 1} + \mu_1 \sqrt{\lambda_i} + \mu_2 = 0, \label{eq:extremum_condition}
\end{equation}
where $\mu_1$ and $\mu_2$ are Lagrange multipliers. Note that for $q \geq 1$, the function $f(\sqrt{\lambda_i}) = (\sqrt{\lambda_i})^{2q - 1}$ is convex with respect to $\sqrt{\lambda_i}$. Since a convex function and a linear function can intersect in at most two points, Eq.~(\ref{eq:extremum_condition}) admits at most two distinct positive solutions for $\sqrt{\lambda_i}$, which we denote these solutions by $\gamma$ and $\delta$. The Schmidt vector $\vec{\lambda} = (\lambda_1, \lambda_2, \dots, \lambda_d)$ then takes the following form
\begin{equation}
\lambda_j =
\begin{cases}
\gamma^2, & j = 1, \cdots, n, \\
\delta^2, & j = n+1, \cdots, n+m, \\
0, & j = n+m+1, \cdots, d,
\end{cases}
\label{eq:schmidt_vector_form}
\end{equation}
where $n + m \leqslant d$ and $n \geqslant 1$. The minimization problem thus reduces to the following:

For given integers $n, m$ such that $n + m \leqslant d$,
\begin{eqnarray}
\xi(F,q,s,d)&=&\min\ C_{q,s}(|\psi\rangle)
\nonumber\\
&=& \min  (1 - (n\gamma^{2q} + m\delta^{2q})^s), \label{eq:min_problem} \\
\text{subject to}&& n\gamma^2 + m\delta^2 = 1,\nonumber  \\
\qquad\qquad && n\gamma + m\delta = \sqrt{F d}. \label{eq:constraint_nm2}
\end{eqnarray}

Solving Eq.~(\ref{eq:constraint_nm2}), we obtain the following solutions
\begin{equation}
\gamma_{nm}^\pm(F) = \frac{n\sqrt{F d} \pm \sqrt{nm(n + m - F d)}}{n(n + m)}, \label{eq:gamma_solution}
\end{equation}
and
\begin{equation}
\delta_{nm}^\pm(F) = \frac{\sqrt{F d} - n\gamma_{nm}^\pm}{m} = \frac{m\sqrt{F d} \mp \sqrt{nm(n + m - F d)}}{m(n + m)}. \label{eq:delta_solution}
\end{equation}

Since $\gamma_{nm}^- = \delta_{nm}^+$, the expression in Eq.~(\ref{eq:min_problem}) takes the same value for both $\gamma_{nm}^+$ and $\gamma_{nm}^-$. We can therefore restrict our attention to the solution $\gamma_{nm} := \gamma_{nm}^+$. For $\gamma_{nm}$ to be a valid solution, the expression under the square root in Eq.~(\ref{eq:gamma_solution}) must be non-negative, which implies $Fd \leq n + m$. Additionally, the non-negativity of $\delta_{nm}$ in Eq.~(\ref{eq:delta_solution}) requires $Fd \geq n$. Within this region, one can verify that $\delta_{nm}(F) \leq \sqrt{F d}/(n + m) \leq \gamma_{nm}(F)$. Note that $n = 0$ is not allowed; hence, we must have $n \geq 1$.

To find the minimum of $C_{q,s}(|\psi\rangle)$ over all valid choices of $n$ and $m$, we treat $n$ and $m$ as continuous variables and minimize over the parallelogram defined by $1 \leq n \leq F d$ and $F d \leq n + m \leq d$. This region reduces to a line when $F d = 1$, corresponding to the separability boundary. Within the parallelogram, we have $\gamma_{nm} \geq \delta_{nm} \geq 0$, with equality $\gamma_{nm} = \delta_{nm}$ occurring if and only if $n + m = F d$, and $\delta_{nm} = 0$ if and only if $n = F d$.

By analyzing the derivatives of $\gamma_{nm}$ and $\delta_{nm}$ with respect to $n$ and $m$ via differentiation of the constraints in Eq. (\ref{eq:constraint_nm2}), we obtain that
\begin{eqnarray}
  \frac{\partial \gamma}{\partial n} &=& \frac{1}{2n} \frac{2\gamma \delta - \gamma^2}{\gamma - \delta}, \\
\frac{\partial \delta}{\partial n} &=& -\frac{1}{2m} \frac{\gamma^2}{\gamma - \delta}, \\
\frac{\partial \delta}{\partial m} &=& -\frac{1}{2m} \frac{2\gamma \delta - \gamma^2}{\gamma - \delta}, \\
\frac{\partial \gamma}{\partial m} &=& \frac{1}{2n} \frac{\delta^2}{\gamma - \delta}.
\label{eq:partial_derivatives}
\end{eqnarray}

These can be used in Eq.(\ref{eq:min_problem}) to calculate the  derivatives of $C_{q,s}(\psi)$ with respect to $n$ and $m$ as
    \begin{eqnarray}
  \frac{\partial C_{q,s}}{\partial n} &=&-s(n\gamma^{2q}+m\delta^{2q})^{s-1}\Big(\gamma^{2q}+\frac{q\gamma^{2q-1}(2\delta\gamma-\gamma^2)-q\delta^{2q-1}\gamma^2}{\gamma-\delta}\Big),
\end{eqnarray}
and
\begin{eqnarray}
 \frac{\partial C_{q,s}}{\partial m}&=& -s(n\gamma^{2q}+m\delta^{2q})^{s-1}\left(\delta^{2q}+\frac{q\gamma^{2q-1}\delta^{2}-q\delta^{2q-1}(2\gamma\delta-\gamma^2)}{\gamma-\delta}\right).
\label{eq:Cqspartial_derivatives}
\end{eqnarray}

Let $u = m-n$ and $v = m + n$, where $u$ and $v$ represent the motions parallel and perpendicular to the boundaries of the parallelogram defined by
$m + n = c$ (with $c$ being a constant). The derivative of $C_{q,s}(\ket{\psi})$ with respect to $u$ is expressed as
 \begin{eqnarray}
    \frac{\partial C_{q,s}}{\partial u}&=&\frac{\partial C_{q,s}}{\partial n}\cdot\frac{\partial n}{\partial u}+\frac{\partial C_{q,s}}{\partial m}\cdot\frac{\partial m}{\partial u}\nonumber\\
    &=&\frac{1}{2}s(n\gamma^{2q}+m\delta^{2q})^{s-1}\Big(\gamma^{2q}+\frac{q\gamma^{2q-1}(2\delta\gamma-\gamma^2)-q\delta^{2q-1}\gamma^2}{\gamma-\delta}-\delta^{2q}-\frac{q\gamma^{2q-1}\delta^{2}-q\delta^{2q-1}(2\gamma\delta-\gamma^2)}{\gamma-\delta}\Big)\nonumber\\
    &=&\frac{1}{2}s(n\gamma^{2q}+m\delta^{2q})^{s-1}((1-q)\gamma^{2q}-\delta^{2q}+q\gamma^{2q-1}\delta-2q\gamma\delta^{2q-1}).
\label{eq:partialcqsm}
\end{eqnarray}

 Under the condition  $q\geq 1$ and $qs\geq 1$, we analyze $\frac{\partial C_{q,s}}{\partial m}$ in Eq.\eqref{eq:Cqspartial_derivatives}. Then, it follows that
\begin{eqnarray}
     \frac{\partial C_{q,s}}{\partial m}
     &\overset{(a)}{\leq}&-sq(n\gamma^{2q}+m\delta^{2q})^{s-1}\left(\frac{\gamma^{2q-1}\delta^{2}-2\delta^{2q}\gamma+\delta^{2q-1}\gamma^2}{\gamma-\delta}\right)\nonumber\\
&\overset{(b)}{\leq}&-sq(n\gamma^{2q}+m\delta^{2q})^{s-1}\left(\frac{\delta^{2q+1}-2\delta^{2q}\gamma+\delta^{2q-1}\gamma^2}{\gamma-\delta}\right)\nonumber\\
&=&-sq(n\gamma^{2q}+m\delta^{2q})^{s-1}\delta^{2q-1}({\gamma-\delta})\nonumber\\
&\overset{(c)}{\leq}&0
\label{eq:Cqspartial_derivatives1}
\end{eqnarray}
where the inequality in (a) is validated due to the fact that $\delta^{2q}\geq0$;
(b) holds for $\gamma\geq \delta$; (c) holds from $(n\gamma^{2q}+m\delta^{2q})^{s-1}>0$ for $q>1$ and $qs\geq 1$.

Next, we consider  $\frac{\partial C_{q,s}}{\partial u}$ in Eq.\eqref{eq:partialcqsm} and let $f(x)=(1-q)x^{2q}+qx^{2q-1}-2qx-1$ with $x\geq 1$ and $q>1$. We obtain that
\begin{equation}
    \frac{\partial f(x)}{\partial x}=qg(x),
\end{equation}
where $g(x)=2(1-q)x^{2q-1}+(2q-1)x^{2q-2}-2$. Therefore, we have that $g(x)$ is a decreasing function, i.e.,$\frac{\partial g(x)}{\partial x}=2(1-q)(2q-1)(x-1)x^{2q-3}\leq 0$ for $x\geq 1$ and $q>1$. Its maximum occurs at $x = 1$, yielding $g(1)=-1 < 0$. Hence, $\frac{\partial f(x)}{\partial x} < 0$, so $f(x)$ is decreasing for $x \geq 1$ and $q >1$. Since $\gamma \geq \delta \geq 0$, we have $\frac{\gamma}{\delta}\geq1$, and thus
\begin{eqnarray}
    f(\frac{\gamma}{\delta})&=&(1-q)\left(\frac{\gamma}{\delta}\right)^{2q}+q\left(\frac{\gamma}{\delta}\right)^{2q-1}-2q\left(\frac{\gamma}{\delta}\right)-1\nonumber\\
    &\leq& f(1)
    =-2q<0.
    \label{eq:fx}
\end{eqnarray}
Therefore, by inequality (\ref{eq:fx}), we can obtain that
   \begin{eqnarray}
    \frac{\partial C_{q,s}}{\partial u}&=&\frac{1}{2}s(n\gamma^{2q}+m\delta^{2q})^{s-1}
    ((1-q)\gamma^{2q}-\delta^{2q}+q\gamma^{2q-1}\delta-2q\gamma\delta^{2q-1})\nonumber\\
    &\leq&0.
\label{eq:cqsleq0}
\end{eqnarray}

From Eqs.(\ref{eq:Cqspartial_derivatives1}) and (\ref{eq:cqsleq0}), it is clear that within the parallelogram region, $\frac{\partial C_{q,s}}{\partial m} \leq 0$, and $\frac{\partial C_{q,s}}{\partial u} \leq 0$ except on the boundary \(m + n = Fd\), where the derivative vanishes. These monotonicity properties imply that the minimum of \(C_{q,s}(|\psi\rangle)\) is attained at the vertex where \(n = 1\) and \(m = d - 1\). Therefore, the minimal value of \(C_{q,s}(|\psi\rangle)\) is given by
\begin{equation}
\label{eq:xids}
    \xi(F,q,s,d) = 1 - \big(\gamma_{1,d-1}^{2q}+ (d - 1)\delta_{1,d-1}^{2q}\big)^s,
\end{equation}
where \(\gamma\) and \(\delta\) are defined by
\begin{eqnarray*}
  \gamma &=& \frac{1}{\sqrt{d}}\left[\sqrt{F} + \sqrt{(d - 1)(1 - F)}\right], \\
\delta &=& \frac{1}{\sqrt{d}}\left(\sqrt{F} - \frac{\sqrt{1 - F}}{\sqrt{d - 1}}\right).
\end{eqnarray*}
%%%%%%%%%%%%%%%%%%%%%%%%%%%%%%%%%%%%%%%%%%%%%%%%%%%%%%%%%%%%%%%%%%%%%%%%%%

\section{Analytical Expression of $\xi(\rho_w)$ for
Werner States} \label{sec:Appendix B}
Define the $(U \otimes U)$-twirling transformation as $T_{\text{wer}}(\rho) = \int \mathrm{d}U \, (U \otimes U) \, \rho \, (U^\dagger \otimes U^\dagger)$, which maps any density matrix $\rho$ to a Werner state \(\rho_w\) satisfying
\(T_{\text{wer}}(\rho) = \rho_w(\rho)\), where the Werner parameter is given by $$w(\rho) = \operatorname{tr} \left( \rho \sum_{i<k} |\Phi_{ik}^- \rangle \langle \Phi_{ik}^-| \right).$$
Moreover, the \(\rho_w\) remains invariant under this transformation, i.e., \(T_{\text{wer}}(\rho_w) = \rho_w\) \cite{1999Reduction,lee2003convex}. Applying the twirling operation \(T_{\text{wer}}\) to the pure state $|\psi\rangle = \sum_{i=1}^d \sqrt{\lambda_i} |a_i b_i\rangle$, where \(|\psi\rangle = \sum_{i=1}^d \sqrt{\lambda_i} U_A \otimes U_B |ii\rangle\), we obtain that
\begin{equation}
 T_{\text{wer}} (|\psi\rangle \langle \psi|) = \rho_{w (|\psi\rangle \langle \psi|)} = \rho_{w (\vec{\lambda}, \Lambda)},
\end{equation}
where \(\Lambda = U_A^\dagger U_B\) and the Werner parameter \(w(\vec{\lambda}, \Lambda)\) is given by
\begin{equation}
  w(\vec{\lambda}, \Lambda) = \sum_{i<k} |\langle \Psi_k^- |\psi\rangle|^2 = \frac{1}{2} \sum_{i<k} |\sqrt{\lambda_i} \Lambda_{ki} - \sqrt{\lambda_k} \Lambda_{ik}|^2,
\end{equation}
where \(\Lambda_{ik} = \langle i|\Lambda|k \rangle\). Then the function \(\xi\) can be expressed as
\begin{equation}
\label{eq:minwerner}
 \xi (\rho_w) = \min_{\{\vec{\lambda}, \Lambda\}} \left\{ C_{q,s} (\{\vec{\lambda}) : w(\vec{\lambda}, \Lambda) = w \right\}.
\end{equation}

Using \(w(\vec{\lambda}, \Lambda) = w\), we derive
\begin{eqnarray}
  2w &=& 1 - \sum_{i=1}^d \lambda_i |\Lambda_{ii}|^2 - 2 \sum_{i<k} \sqrt{\lambda_i \lambda_k} \operatorname{Re} (\Lambda_{ik} \Lambda_{ki}^*)\nonumber \\
&\leq& 1 + 2 \sum_{i<k} \sqrt{\lambda_i \lambda_k} |\operatorname{Re} (\Lambda_{ik} \Lambda_{ki}^*)| \nonumber\\
&\leq& 1 + 2 \sum_{i<k} \sqrt{\lambda_i \lambda_k} \nonumber\\
&=& \left|\sum_{i=1}^d \sqrt{\lambda_i}\right|^2,
\label{eq:Bwerner}
\end{eqnarray}
where \(\operatorname{Re}(z)\) denotes the real part of the complex number \(z\).

The equalities in Eq.~(\ref{eq:Bwerner}) hold only when the two nonzero components satisfy \(\Lambda_{01} = 1\) and \(\Lambda_{10} = -1\), and the vector \(\vec{\lambda} = (\lambda_1, \lambda_2, 0, \ldots, 0)\). Under these conditions, the minimization in Eq.~(\ref{eq:minwerner}) is achieved \cite{vollbrecht2001entanglement}, simplifying to
\begin{equation}
\label{eq:minsip}
\xi(\rho_w) = \min_{\vec{\lambda}} \left\{ C_{q,s} \left( \vec{\lambda} \right) : \left|\sum_{i=1}^2 \sqrt{\lambda_i}\right|^2 = 2w \right\}.
\end{equation}

For \( w \in (0, \frac{1}{2}] \), it is always possible to choose appropriate unitary transformations \( V_A \) and \( V_B \) such that \(\lambda_1 = 1\), leading to \(\xi(\rho_w) = 0\). For \( w > \frac{1}{2} \), the minimization in Eq.~(\ref{eq:minsip}) is performed under the constraints
\begin{eqnarray}
    \sum_{i=1}^2 \lambda_i &=& 1,  \\
\sum_{i=1}^2 \sqrt{\lambda_i} &=& \sqrt{2w}.
\end{eqnarray}

The subsequent calculation follows a procedure analogous to that in Appendix~\ref{sec:Appendix A}. By setting \( d = 2 \) and \( F = w \), we obtain the result
\begin{equation}
\xi(\varrho_w) =1 - \left[\left( \frac{1+G}{2}\right)^q +\left( \frac{1-G}{2} \right)^q \right]^s,
\label{eq:entanglement_measure}
\end{equation}
where $G = 2\sqrt{w(1-w)}$, $q\geq 1\ \text{and } qs\geq 1$.
%%%%%%%%%%%%%%%%%%%%%%%%%%%%%%%%%%%%%%%%%%%%%%%%%%%%%%%%%%%%%%%%%%%%%%%%%
% Bibliography
\bibliographystyle{unsrt}
\bibliography{main}
\fussy

\end{document}